\documentclass[a4paper,twoside]{amsart}
\usepackage{amsmath}
\usepackage{amsfonts}
\usepackage{amssymb}
\usepackage{amsthm}
\usepackage{newlfont}
\usepackage{graphicx}
\usepackage{amscd}
\usepackage{young}

\textwidth 6.25in
\textheight 9in
\topmargin -0.3cm
\leftmargin -3cm
\oddsidemargin=0cm
\evensidemargin=0cm
\hfuzz5pt 
\theoremstyle{plain}
\newtheorem{Th}{Theorem}[section]
\newtheorem{Cor}[Th]{Corollary}
\newtheorem{Lem}[Th]{Lemma}
\newtheorem{Prop}[Th]{Proposition}
\theoremstyle{definition}

\theoremstyle{remark}
\newtheorem*{Rem}{Remark}
\numberwithin{equation}{section}

\newcommand{\ZZ}{{\mathbb Z}}
\newcommand{\VV}{{\mathbb V}}


\setcounter{MaxMatrixCols}{15}


\newcommand{\bphi}{\boldsymbol{\phi}}

\newcommand{\bpsi}{\boldsymbol{\psi}}


\begin{document}

\title
{Hermite--Pad\'{e} approximation and integrability}

\author{Adam Doliwa and Artur Siemaszko}

\address{Faculty of Mathematics and Computer Science\\
University of Warmia and Mazury in Olsztyn\\
ul.~S{\l}oneczna~54\\ 10-710~Olsztyn\\ Poland} 
\email{doliwa@matman.uwm.edu.pl, artur@uwm.edu.pl}

%
\date{}
\keywords{Hermite--Pad\'{e} approximation, rational approximation, numerical analysis, discrete integrable systems, the Hirota equation, incidence geometry, multiple orthogonal polynomials, Toda chain equations}
\subjclass[2010]{41A21, 37N30, 37K20, 37K60, 65Q30, 51A20, 42C05}

\begin{abstract}
We show that solution to the Hermite--Pad\'{e} type I approximation problem leads in a natural way to a subclass of solutions of the Hirota (discrete Kadomtsev--Petviashvili) system and of its adjoint linear problem. Our result explains the appearance of various ingredients of the integrable systems theory in application to multiple orthogonal polynomials, numerical algorthms, random matrices, and in other branches of mathematical physics and applied mathematics where the Hermite--Pad\'{e} approximation problem is relevant. We present also the geometric algorithm, based on the notion of Desargues maps, of construction of solutions of the problem in the projective space over the field of rational functions. As a byproduct we obtain the corresponding generalization of the Wynn recurrence.
We isolate the boundary data of the Hirota system which provide solutions to Hermite--Pad\'{e} problem showing that the corresponding reduction lowers dimensionality of the system. In particular, we obtain certain equations which, in addition to the known ones given by Paszkowski, can be considered as direct analogs of the Frobenius identities. We study the place of the reduced system within the integrability theory, which results in finding multidimensional (in the sense of number of variables) extension of the discrete-time Toda chain equations. 

\end{abstract}
\maketitle

\section{Introduction}

\subsection{Numerical algorithms and integrability}
The paper concerns the relationship between numerical algorithms and discrete integrable equations. Our interest will not be however the area of the numerical integration of differential equations but rather the subject related to the \emph{Quotienten--Differenzen} (QD) matrix eigenvalue algorithm, Pad\'{e} approximation or Bose--Chaudhuri--Hocquenghem (BCH)--Goppa decoding algorithm, which can be described~\cite{Moser,GRP,PGR} in terms of the discrete-time Toda chain equations \cite{Toda-67,Symes,Hirota-2dT,PGR-LMP,Hirota-1993}. It is closely related also to various convergence acceleration algorithms whose list together with the correspoding integrable systems can be found in~\cite{CHHL}; see also \cite{NagaiTokihiroSatsuma,BrezinskiHeHuRedivo-ZagliaSun,HeHuSunWeniger}. Due to the structure of determinants involved in the construction, the theory of orthogonal polynomials~\cite{Ismail} is closely related~\cite{Brezinski-PTA-OP} to both the Pad\'{e} approximants and the Toda chain equations (with either continous~\cite{Toda-TL} or discrete time~\cite{Hirota-2dT}), and via this connection the integrable systems appear in the theory of random matrices~\cite{AdlervanMoerbeke,AdlervanMoerbekeVanhaecke}. 

In addition to looking for new examples of such a relationship, one can ask about the general reason explaining its existence.
In the review paper~\cite{Brezinski-CA-XX} one can find the following opinion: \emph{The connection between integrable systems and convergence acceleration algorithms needs to be investigated in more details to fully understand its meaning which is not clear yet}. 
The Toda chain equations can be obtained as a particular integrable reduction of the discrete Kadomtsev--Petviashvili (KP) system by Hirota~\cite{Hirota}, which was in fact published initially under the name of \emph{discrete analogue of a generalized Toda equation}. The Hirota system plays distinguished role in the theory of integrable systems and in their applications~\cite{Miwa,Shiota}. In particular, it is known that the majority of known integrable systems can be obtained as its reductions, appropriate continuous limits~\cite{KNS-rev,Zabrodin} or subsystems \cite{Doliwa-Des-red,doliwakp}. The starting point of the research reported below was raised by the following questions: \emph{What is the role of the Hirota system in the numerical algorithms related to the Pad\'{e} problem? What are their corresponding generalizations, which should contain existing examples as special cases.} 
To support our interest in the problem we mention that in the textbook~\cite{IDS} where the relationship of discrete integrable systems to numerical algorithms based on Pad\'{e} approximants is discussed in detail, it is remarked that extension of the connection of their generalizations \emph{with integrable P$\Delta$E remain largely to be explored}.

\subsection{Description of the main results}
In the present paper we would like to point out the fundamental relation of the Hirota system  to the Hermite--Pad\'{e} approximants, introduced by Hermite~\cite{Hermite,Hermite-P} to prove that Euler's number $e$ is transcendental. Following similar approach Lindemann proved~\cite{Lindemann} analogous result for number $\pi$.
A formal study of Hermite--Pad\'{e} approximation for general systems of functions was initiated by Mahler~\cite{Mahler,Mahler-P}. 

Such generalization of the Pad\'{e} approximants has applications in various areas ranging from the
number theory \cite{Apery}, multiple orthogonal polynomials~\cite{Aptekarev,VanAsche}, random matrices~\cite{Kuijlaars,BleherKuijlaars} to  numerical analysis. 
During last twenty years also the interplay between Hermite--Pad\'{e} approximation theory and integrable systems has been observed in numerous papers~\cite{ BertolaGethmanSzmigielski,ManoTsuda,LopezLagomasinoMedinaPeraltaSzmigielski,AptekarevDerevyaginVanAssche,Filipuk-VanAssche-Zhang,Alvarez-FernandezPrietoManas,AriznabarretaManas}. In particular, in \cite{AptekarevDerevyaginVanAssche}  Hermite--Pad\'{e} approximants for two functions in the framework of discrete integrable systems defined on two-dimensional integer lattice are studied, and the Lax representation for the nearest neighbor recurrence relations is constructed. 

For practitioners of the integrable discrete systems theory it may be therefore interesting to know that within the context of Hermite--Pad\'{e} approximants the Hirota equation and its linear problem  appear in works of Paszkowski~\cite{Paszkowski-H,Paszkowski} and of Della Dora and Di~Crescenzo~\cite{DD-DC-1,DD-DC-2}, but without any mention of their integrability.
It turns out however that only special subclass of solutions of the Hirota system can be obtained this way. The first restriction can be understood by analogy to the fact that only the discrete-time Toda semi-infinite chain equations appear in the Pad\'{e} theory, while in the literature there are known also other solutions like the finite or periodic chain reductions, or infinite chain soliton solutions. 
More importantly, the relevant solutions of the Hirota system are subject to an additional constraint, first observed by Paszkowski~\cite{Paszkowski}. This constraint in the Pad\'{e} approximation case reduces to the discrete-time Toda chain equation itself. In the paper we prove integrability of the constraint in general context thus obtaining a generalization of the discrete-time Toda chain to arbitrary dimension of the lattice. It is important to notice that although the number of discrete variables can be arbitrarily large, the initial boundary value problem is typical for two-dimensional systems, i.e. it is posed on one-dimensional objects. Such "multidimensional" integrable systems are known in the literature, for example the diagonally invariant reduction of the discrete Darboux system~\cite{Doliwa-Santini-sym}. 

In~\cite{AptekarevDerevyaginMikiVanAssche} another equations called discrete multiple Toda lattice have been obtained by generalization to multiple orthogonal polynomials the known relation between the discrete Toda lattice and orthogonal polynomials, where the time variable shows up as a result of an appropriate variation of the measure. The bilinear form of the equations splits also there into the Hirota equations and an additional constraint, almost identical to that introduced by  Paszkowski. As we will demonstrate, both systems of equations are equivalent, see  Section~\ref{sec:integrability} for details.

Another question raised in the paper is the geometric meaning of the Hermite--Pad\'{e} approximation algorithm. The possibility of visualization of arithmetic operations often gives new insight into the problem under consideration; recall, for example, geometric description of the Euclidean algorithm and of the construction of the continued fraction approximation described by Klein~\cite{Klein}. The connection between integrable partial differential equations and the geometry of submanifolds~\cite{Sym,RogersSchief}, deeply rooted in the works of XIXth century geometers~\cite{Bianchi,Darboux-OS,DarbouxIV}, has been transfered to the discrete level~\cite{BP1,BP2,DS-AL,DCN,MQL}, see also \cite{BobSur} for a review. The geometric approach often allows to formulate crucial properties of integrable systems in terms of incidence geometry statements and provides a meaning for various involved calculations. In the paper we give geometric meaning of the construction of solutions to the $m$-dimensional Hermite--Pad\'{e} problem within the context of $(m-1)$-dimensional projective space over the field of rational functions (contained in the field of formal power series). The answer exploits the geometric interpretation of the Hirota discrete KP system given in~\cite{Dol-Des}.

The structure of the paper is as follows. We conclude the Introduction Section by presenting first necessary information about the Hirota system and its integrability, and then we recall basic theory of the Hermite--Pad\'{e} approximants~\cite{DD-DC-1}. In Section~\ref{sec:HP-equations} we study relations between neighbouring elements of the Hermite--Pad\'{e} table, i.e.  the analogs of the classical Frobenius identities. Apart from the known equations we give also some new ones. Then in Section~\ref{sec:geometry-HP} we discuss geometrically the initial boundary-value problem for construction of points of the projective space over the field of rational functions which represent Hermite--Pad\'{e} approximants. Working directly on the level of rational functions we propose also the corresponding counterpart of the Wynn recurrence. Finally, in Section~\ref{sec:integrability} we prove integrability of the Paszkowski reduction of the Hirota equations. 
Our result presents therefore the Hermite--Pad\'{e} approximants within the more general context of the integrable systems theory thus explaining appearance of various ingredients of the theory in application to multiple orthogonal polynomials, numerical algorithms, random matrices, and in all other branches of mathematical physics and applied mathematics where the Hermite--Pad\'{e} approximation problem is relevant.

\subsection{The Hirota equation} \label{sec:Hirota}

The standard form of Hirota's discrete Kadomtsev--Petviashvili (KP) system of equations \cite{Hirota,Miwa} reads as follows
\begin{equation} \label{eq:H-M}
\tau_{(i)}\tau_{(jk)} - \tau_{(j)}\tau_{(ik)} + \tau_{(k)}\tau_{(ij)} = 0,
\qquad 1\leq i< j <k ;
\end{equation}
here $\tau$ is unknown function of discrete variables $(n_1, n_2, n_3, \dots  )$, $n_k\in\ZZ$, with values in a field; usually the real or complex numbers, but other possibilities like the finite fields were also studied in the literature~\cite{BialeckiDoliwa}. In the above and in all the paper we use the short-hand notation with indices in brackets meaning shifts in discrete variables, for example 
\begin{equation} \label{eq:shifts}
\tau_{(\pm i)}(n_1, \dots , n_i, \dots  ) = \tau(n_1, \dots , n_i \pm 1, \dots );
\end{equation}
accordingly, $\tau_{(ij)}$ means increasing by $1$ both arguments $n_i$ and $n_j$.
\begin{Rem}
Initially the Hirota system was considered for $m=3$ discrete variables only, which results in a single equation. An important feature of the equations is that the number of discrete variables can be grater then three and such augumentation does not lead to any restriction on the solution space, i.e. the additional variables can be considered as parameters of commuting symmetries of the original equation. Such a \emph{multidimensional consistency}, which was observed also for other systems like for example the discrete Darboux equations~\cite{MQL,TQL}, has been placed as the central concept of the modern theory of discrete integrable systems in~\cite{FWN-lB,ABS}.
\end{Rem}
\begin{Rem}
The Hirota system has actually more involved (affine Weyl group of type $A$) symmetry structure~\cite{Dol-AN} than that described above. 
\end{Rem}

To apply techniques of the integrable systems theory to construct solutions of the Hirota system it is important to represent it as compatibility condition of a linear system, whose standard form~\cite{DJM-II} reads as follows
\begin{equation} \label{eq:lin-dKP}
\bpsi_{(i)} - \bpsi_{(j)} = \frac{\tau \tau_{(ij)}}{\tau_{(i)} \tau_{(j)}} \bpsi ,  \qquad 1\leq i < j ,
\end{equation}
where the so called wave function $\bpsi$ takes values in a vector space $\VV$.
Equivalently one can consider the adjoint linear problem
\begin{equation} \label{eq:lin-dKP*}
\bpsi^*_{(j)} - \bpsi^*_{(i)} = \frac{\tau \tau_{(ij)}}{\tau_{(i)} \tau_{(j)}}  \bpsi^*_{(ij)} , 
\qquad1\leq i < j ,
\end{equation}
where the adjoint wave function $\bpsi^*$ takes values in the dual space $\VV^*$; this point of view is important in construction of the Darboux transformations~\cite{Nimmo-KP} of the system.
\begin{Rem}
Notice that the Hirota system is invariant with respect to reversal of directions of all discrete variables, and upon the direction reversal the linear and adjoint problems are exchanged.
\end{Rem}

There exists~\cite{Saito-Saitoh} a duality between the linear problems and the Hirota equation itself. An easy calculation shows that the vector-valued functions 
	\begin{equation} \label{eq:psi-phi}
	\bphi = \tau \bpsi, \qquad \bphi^* = \tau \bpsi^* ,
	\end{equation}
	satisfy the following (so called bilinear) form of the linear problem and of its adjoint, where we still keep the assumption that $i<j $
	\begin{align} \label{eq:lin-bil}
	\tau_{(j)} \bphi_{(i)} - \tau_{(i)} \bphi_{(j)} & =  \tau_{(ij)} \bphi,\\
	\label{eq:ad-lin-bil}
	\bphi^*_{(j)}\tau_{(i)}  -  \bphi^*_{(i)}\tau_{(j)} & =  \bphi^*_{(ij)} \tau.
	\end{align}
\begin{Cor} \label{cor:quad-phi}
The above duality implies that scalar components $\phi_\ell$ (or $\phi^{*}_{\ell}$) of the new wave function (or of its adjoint) satisfy the Hirota system \eqref{eq:H-M} as well.
\end{Cor}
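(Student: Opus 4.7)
The plan is to derive the Hirota equation \eqref{eq:H-M} for each scalar component $\phi_\ell$ directly from three shifted copies of the bilinear linear problem \eqref{eq:lin-bil}. Fix $1\le i<j<k$. Shifting \eqref{eq:lin-bil} for the pair $(i,j)$ by the index $k$, and analogously shifting the relations for $(i,k)$ and $(j,k)$ by the missing third index, yields
\begin{align*}
\tau_{(ijk)}\bphi_{(k)} &= \tau_{(jk)}\bphi_{(ik)} - \tau_{(ik)}\bphi_{(jk)}, \\
\tau_{(ijk)}\bphi_{(j)} &= \tau_{(jk)}\bphi_{(ij)} - \tau_{(ij)}\bphi_{(jk)}, \\
\tau_{(ijk)}\bphi_{(i)} &= \tau_{(ik)}\bphi_{(ij)} - \tau_{(ij)}\bphi_{(ik)}.
\end{align*}
These express the single-step shifts of $\bphi$, scaled by $\tau_{(ijk)}$, as $\tau$-coefficient combinations of the two-step shifts $\bphi_{(ij)}$, $\bphi_{(ik)}$, $\bphi_{(jk)}$.

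Next, I would substitute these three expressions into $\tau_{(ijk)}\bigl[\phi_{\ell,(i)}\phi_{\ell,(jk)} - \phi_{\ell,(j)}\phi_{\ell,(ik)} + \phi_{\ell,(k)}\phi_{\ell,(ij)}\bigr]$, obtaining six terms of the form $\pm\tau_{(ab)}\phi_{\ell,(cd)}\phi_{\ell,(ef)}$ with $\{ab,cd,ef\}=\{ij,ik,jk\}$. Each of the three possible unordered monomials $\phi_{\ell,(cd)}\phi_{\ell,(ef)}$ then appears exactly twice with opposite signs, so the whole sum vanishes identically. Dividing by $\tau_{(ijk)}$ (working generically, or equivalently in the field of fractions) produces \eqref{eq:H-M} with $\tau$ replaced by $\phi_\ell$. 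The adjoint case is settled identically using \eqref{eq:ad-lin-bil}, or more quickly by invoking the remark after \eqref{eq:lin-dKP*} that reversal of all discrete directions swaps the linear and adjoint problems while leaving \eqref{eq:H-M} invariant.

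There is no serious obstacle: the whole content of the corollary lies in the symmetric choice of three shifts and the resulting six-term cancellation, which becomes transparent once the shifted identities are viewed as a cyclic system. The combinatorics is the same as that underlying the three-term Pl\"{u}cker relation, which is consistent with the well-known determinantal ($\tau$-function) nature of solutions of the Hirota system.
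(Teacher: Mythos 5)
Your computation is correct: the three shifted copies of \eqref{eq:lin-bil} are written correctly, the substitution into $\tau_{(ijk)}\bigl[\phi_{\ell(i)}\phi_{\ell(jk)}-\phi_{\ell(j)}\phi_{\ell(ik)}+\phi_{\ell(k)}\phi_{\ell(ij)}\bigr]$ does produce six terms cancelling in pairs, and division by $\tau_{(ijk)}$ (legitimate generically; in the Hermite--Pad\'{e} realization this is the perfectness assumption $\Delta\neq 0$) yields \eqref{eq:H-M} for $\phi_\ell$. The route differs somewhat in flavour from the paper's: the paper gives no computation here, but simply invokes the Saito--Saitoh duality, i.e.\ the observation that \eqref{eq:lin-bil}, read with the roles of $\tau$ and the scalar component $\phi_\ell$ exchanged, is again of the form of the adjoint problem \eqref{eq:ad-lin-bil}, so the statement reduces to the already-stated fact that such bilinear relations in three or more directions force the Hirota equation; the explicit version of that elimination appears later in the paper, where \eqref{eq:H-Z} is obtained from three \emph{unshifted} copies of \eqref{eq:Z-Delta} by eliminating $\Delta$. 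Your argument is the forward-shifted analogue of that elimination, and it has the small bonus of showing that the Hirota equation for $\phi_\ell$ follows from the linear problem alone, without invoking \eqref{eq:H-M} for $\tau$. Two minor points: for the adjoint case the treatment is not literally ``identical,'' since shifting \eqref{eq:ad-lin-bil} by the third index produces relations weighted by $\phi^*_{(ijk)}$ rather than $\tau_{(ijk)}$; there the \emph{unshifted} relations already suffice (substitute $\phi^*_{\ell(ij)}\tau=\phi^*_{\ell(j)}\tau_{(i)}-\phi^*_{\ell(i)}\tau_{(j)}$ and its two companions, and divide by $\tau$), or, as you also propose, one can appeal to the direction-reversal symmetry exchanging \eqref{eq:lin-bil} and \eqref{eq:ad-lin-bil}. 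Either way your proof is sound.
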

\begin{proof}
To show directly that $\phi^*_\ell$ satisfies equation \eqref{eq:H-M} 
multiply the $\ell$th component of \eqref{eq:ad-lin-bil} by $\phi^*_{\ell(k)}$, where we assume that $i<j<k$. Then subtract from it the $ik$ version of \eqref{eq:ad-lin-bil} multiplied by $\phi^*_{\ell(j)}$, and add the $jk$ version multiplied by $\phi^*_{\ell(i)}$. The terms on the left side of equality cancel out, and the right side divided by $\tau$ results in
\begin{equation*}
\phi^*_{\ell(i)}\phi^*_{\ell(jk)} - \phi^*_{\ell(j)}\phi^*_{\ell(ik)} + \phi^*_{\ell(k)}\phi^*_{\ell(ij)} = 0,
\qquad 1\leq i< j <k .
\end{equation*}	
\end{proof}

\begin{Rem}
	As it was shown in \cite{Dol-Des} the ratio of two scalar solutions of the linear system \eqref{eq:lin-dKP} (thus also of \eqref{eq:lin-bil}) satisfies the so called Schwarzian form of the discrete Kadomtsev--Petviashvili (dSKP) system \cite{FWN-Capel,BoKo-N-KP}, which will be relevant in Section~\ref{sec:geometry-HP}. By the reversal of the directions symmetry, discussed above, analogous result is true also for the adjoint linear problems.
\end{Rem}

The initial boundary value data for the Hirota system were given on the geometric level in \cite{Dol-Des} on the basis of the relation of Desargues maps to multidimensional quadrilateral lattices~\cite{DCN,MQL}. The data consist of a system of two dimensional lattices of planar quadrilaterals (one such a lattice in the basic case of three discrete variables). See also \cite{Doliwa-Des-red} for explanation of the relation on the level of the root lattices and corresponding Weyl groups.

\subsection{Hermite--Pad\'{e} approximants} \label{sec:H-P}
The classical Hermite--Pad\'{e} problem and its solution can be stated as follows \cite{DD-DC-1}. Given $m$ elements $(f_1,\dots , f_m)$ of the algebra $\Bbbk[[x]]$ of formal series in variable $x$ with coefficients in the field $\Bbbk$, 
\begin{equation}
	f_i(x) = \sum_{j=0}^\infty f^i_j x^j, 
\end{equation}
and given an element $n=(n_1,\dots , n_m)$ of $\ZZ_{\geq -1}^m$, where we also write $|n|= n_1 + \dots + n_m$, a \emph{Hermite--Pad\'{e} form of degree $n$} is every system of polynomials $(Y_1, \dots , Y_m)$ in $\Bbbk[x]$, not all equal to zero, with corresponding degrees $\deg Y_i \leq n_i$, $i=1,\dots , m$ (degree of the zero polynomial by definition equals $-1$), and such that 
\begin{equation}
Y_1(x)f_1(x) + \dots + Y_m(x) f_m(x) = x^{|n|+m-1}\Gamma(x)
\end{equation}
for a series $\Gamma \in \Bbbk [[x]]$ with non-negative powers of $x$.

Define the matrix $\mathcal{M}(n)$ of $(|n|+m-1)$ rows and $(|n|+m)$ columns 
\begin{equation}
\mathcal{M}(n) = \left( \begin{array}{cccccccc}
f^1_0 &     & 0 & & & f^m_0 &   & 0 \\
f^1_1 & \ddots    &   & \cdots & \cdots & f^m_1 & \ddots  &   \\
\vdots &   &  f^1_0 & & & \vdots &    &  f^m_0 \\
\vdots &   &  \vdots & \cdots &\cdots & \vdots &  & \vdots\\
f^1_{|n|+m-2} & \cdots & f^1_{|n|+m - n_1 -2} & & & f^m_{|n|+m-2} & \cdots & f^m_{|n|+m - n_m -2}
\end{array} \right) ;
\end{equation}
its columns are divided into $m$ (possibly empty) groups, the $i$th group is composed out of $n_i +1$ columns depending on the coefficients of $f_i(x)$ only.
By supplementing $\mathcal{M}(n)$ at the bottom by the line
\begin{equation*}
\left( f_1, xf_1, \dots , x^{n_1} f_1, \cdots , \cdots , f_m, x f_m, \dots , x^{n_m} f_m \right)
\end{equation*}
and calculating the determinant of the resulting square matrix in two ways we obtain the following identity
\begin{equation} \label{eq:Z-f-Delta}
Z_1(x) f_1(x) + \dots + Z_m(x) f_m(x) = x^{|n|+m-1} \sum_{j=0}^\infty \Delta^{(j-1)} x^j, 
\end{equation}
where each $Z_k\in \Bbbk[x]$ is a polynomial of degree not exceeding $n_k$, $k=1,\dots m $, given explicitly as the determinant
\begin{equation} \label{eq:Z-det}
Z_k(x) = \left|
 \begin{smallmatrix}
f^1_0 &     & 0 & & & f^k_0 & & 0 && & f^m_0 &   & 0 \\
f^1_1 & \ddots    &   & \cdots & \cdots & f^k_1  &\ddots &  &\cdots &\cdots & f^m_1 & \ddots  &   \\
\vdots &   &  f^1_0 & & & \vdots && f^k_0 & && \vdots &     &  f^m_0 \\
\vdots &   &  \vdots & \cdots &\cdots &&&&\cdots &\cdots & \vdots &  & \vdots\\
f^1_{|n|+m-2} & \cdots & f^1_{|n|+m - n_1 -2} & & &f^k_{|n|+m-2}& \cdots & f^k_{|n|+m-n_k - 2}& & &  f^m_{|n|+m-2} & \cdots & f^m_{|n|+m - n_m -2} \\
0 & \cdots & 0 & \cdots & \cdots &1 & \cdots & x^{n_k} & \cdots &\cdots & 0 & \cdots & 0
\end{smallmatrix} \right|
\end{equation}
of the matrix $\mathcal{M}(n)$ supplemented at the bottom by the line 
\begin{equation}
\label{eq:Xk}
X_k = (0,\dots, 0, \dots \; \dots , 1, x , \dots , x^{n_k}, \dots \; \dots , 0, \dots ,0),
\end{equation}
consisting of zeros except for the $k$th block of the form $1, x, \dots , x^{n_k}$. Moreover
$\Delta^{(j)}(n)$ is the determinant of the matrix $\mathcal{M}(n)$ supplemented by the line
\begin{equation} \label{eq:bN}
b_{|n|+m+j} = \left( f^1_{|n|+m+j}, f^1_{|n|+m+j-1}, \dots , f^1_{|n|+m+j-n_1}, \cdots , \cdots , f^m_{|n|+m+j},  \dots , f^m_{|n|+m+j-n_m} \right),
\end{equation} 
as the last row;
in particular, the coefficient at $x^{|n|+m-1}$ of $Z_1(x) f_1(x) + \dots + Z_m(x) f_m(x) $ equals 
\begin{equation}
\Delta(n) = \Delta^{(-1)}(n) = \left| \begin{array}{cccccccc}
f^1_0 &     & 0 & & & f^m_0 &   & 0 \\
f^1_1 & \ddots    &   & \cdots & \cdots & f^m_1 & \ddots  &   \\
\vdots &   &  f^1_0 & & & \vdots &    &  f^m_0 \\
\vdots &   &  \vdots & \cdots &\cdots & \vdots &  & \vdots\\
f^1_{|n|+m-2} & \cdots &   & & &  & \cdots & f^m_{|n|+m - n_m -2} \\
f^1_{|n|+m-1} & \cdots & f^1_{|n|+m - n_1 -1} & & & f^m_{|n|+m-1} & \cdots & f^m_{|n|+m - n_m -1}
\end{array} \right| .
\end{equation}
The above system of polynomials $ (Z_1, \dots , Z_m)$ is called  \emph{the canonical Hermite--Pad\'{e} form of degree~$n$}. 

The polynomials $Z_k$, $k=1,\dots, m$, depend not only of the variable $x$, but also on the full set of discrete variables (like the determinants $\Delta(n)$) so we have $Z_k(n,x)=Z_k(n_1,\dots , n_m,x)$. We will however often skip the variables if it does not lead to ambiguity. We denote also shifts in the discrete variables as in equation \eqref{eq:shifts}.

Because the leading term of the polynomial $Z_k$ reads
\begin{equation} \label{eq:leading-term-Z}
Z_k(n,x) = (-1)^{(n_{k+1} + \dots + n_m) - m+k}\Delta_{(-k)}(n) x^{n_k} + \dots,
\end{equation}
therefore if for all $n$ the determinant $\Delta(n)$ does not vanish then the polynomials $Z_k$ are of the maximal order. Such a system of series $(f_1, \dots , f_m)$ is called \emph{perfect} \cite{Mahler-P}, which we assume in the sequel.

\begin{Rem}
	In our paper we consider only the so called type I approximants of the system $(f_1,\dots ,f_m)$. For closely reated type II and mixed type approximants see, for example \cite{Mahler-P}. Similar terminology applies to the corresponding types of multiple orthogonal polynomials, for details see \cite{Ismail,VanAsche}. 
\end{Rem}

\section{Linear and nonlinear equations behind the Hermite--Pad\'{e} approximation problem}
\label{sec:HP-equations}
In this Section we study integrable equations satisfied by the polynomials $Z_\ell$ and the function $\Delta$. After discussing first the generic case we consider then equations in two discrete variables only, which are well known from the theory of Pad\'{e} approximants. They will be needed to formulate initial-boundary value problem for the full set equations in arbitraty number of discrete variables. They will also motivate our interest in providing their counterparts in the Hermite--Pad\'{e} case.
\begin{Rem}
We will need the following determinantal identity, which within the integrable systems community known as the Jacobi identity~\cite{Hirota-book} or Dodgson condensation rule, but within the Pad\'{e} approximation community known as the Sylvester identity~\cite{BakerGraves-Morris}, who generalized works of Jacobi. 

\noindent \emph{Let $A$ be a square matrix, chose two rows with indices $i_1<i_2$ and two columns with indices $j_1<j_2$. Let $A^{i,j}$ denote the matrix with row $i$ and column $j$ deleted. Also let $A^{i_1 i_2, j_1 j_2}$ denote the matrix $A$ with rows $i_1, i_2$ and columns $j_1, j_2$ deleted, then \begin{equation*}
\det A  \; \det A^{i_1 i_2, j_1 j_2} = \det A^{i_1 , j_1 } \; \det A^{i_2, j_2} - \det A^{i_2 , j_1 } \; \det A^{i_1, j_2} .
\end{equation*}}
\end{Rem}

\subsection{Equations in all discrete variables}
\label{sec:mult-eq-HP}
Application of the Sylvester identity to the determinant $Z_{\ell(ij)}$, $\ell = 1,\dots ,m$,  with two bottom rows and the last columns of the $i$th and $j$th blocks gives \cite{DD-DC-2}
\begin{equation} \label{eq:Z-Delta}
Z_{\ell(ij)} \Delta = Z_{\ell(j)} \Delta_{(i)} - 
Z_{\ell(i)} \Delta_{(j)}, \qquad i<j , \qquad \ell = 1, \dots , m.
\end{equation}
By comparing the above system with the symmetric form of the adjoint linear problem \eqref{eq:ad-lin-bil} we identify $\Delta$ with the $\tau$-function and the vector $(Z_1,\dots ,Z_m)$ as the adjoint wave function $\bphi^*$.
By eliminating either $Z_\ell$ or $\Delta$ from equations \eqref{eq:Z-Delta} for three pairs of indices $i<j$, $i<k$ and $j<k$, exactly like in the proof of Corollary~\ref{cor:quad-phi}, we obtain the well known equations.
\begin{Prop}
	The basic determinant $\Delta$ and the canonical Hermite--Pad\'{e} forms $Z_\ell$ satisfy the standard bilinear Hirota equations
	\begin{align}
	\label{eq:H-Delta}
	\Delta_{(ij)} \Delta_{(k)} - \Delta_{(j)} \Delta_{(ik)} +
	\Delta_{(i)} \Delta_{(jk)} = & 0, \qquad 1\leq i < j < k \leq m,\\
	\label{eq:H-Z}
	Z_{\ell(ij)}Z_{\ell(k)} - Z_{\ell(ik)}Z_{\ell(j)} + Z_{\ell(jk)}Z_{\ell(i)} = & 0,  \qquad \quad \ell = 1,\dots , m.
	\end{align} 
\end{Prop}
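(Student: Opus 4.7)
The strategy is to obtain both Hirota identities directly from the basic relation \eqref{eq:Z-Delta} by forming two dual linear combinations: one using the unshifted equations, giving \eqref{eq:H-Z}, and one using their once-shifted counterparts, giving \eqref{eq:H-Delta}.

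For \eqref{eq:H-Z}, I would write \eqref{eq:Z-Delta} for each of the three pairs $(i,j)$, $(i,k)$, $(j,k)$ and form the combination with weights $Z_{\ell(k)}$, $-Z_{\ell(j)}$, $Z_{\ell(i)}$ respectively. On the right-hand side of the resulting identity, each monomial of the form $Z_{\ell(a)} Z_{\ell(b)} \Delta_{(c)}$ appears twice with opposite signs, so all six terms cancel by commutativity of multiplication. The left-hand side collects to $\Delta\cdot\bigl[Z_{\ell(ij)}Z_{\ell(k)} - Z_{\ell(ik)}Z_{\ell(j)} + Z_{\ell(jk)}Z_{\ell(i)}\bigr]$, and cancelling $\Delta$ (non-zero by the perfectness assumption) delivers \eqref{eq:H-Z}.

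For \eqref{eq:H-Delta}, I would shift each of the three equations \eqref{eq:Z-Delta} by the ``missing'' third index, producing three expressions of the form $Z_{\ell(ijk)}\Delta_{(c)}$, $c\in\{i,j,k\}$, in terms of $Z_{\ell(ij)}$, $Z_{\ell(ik)}$, $Z_{\ell(jk)}$ and doubly-shifted $\Delta$. Combining these with weights $\Delta_{(ij)}$, $-\Delta_{(ik)}$, $\Delta_{(jk)}$ yields the dual cancellation pattern: now it is the right-hand side whose six terms annihilate in pairs, while the left-hand side collects to $Z_{\ell(ijk)}\cdot\bigl[\Delta_{(ij)}\Delta_{(k)} - \Delta_{(ik)}\Delta_{(j)} + \Delta_{(jk)}\Delta_{(i)}\bigr]$. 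Cancelling $Z_{\ell(ijk)}$, which is a non-zero polynomial in $x$ by the leading-term formula \eqref{eq:leading-term-Z} together with perfectness, gives \eqref{eq:H-Delta}.

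Conceptually, these are the two dual faces of the identification already indicated in the text: \eqref{eq:Z-Delta} \emph{is} the bilinear adjoint linear problem \eqref{eq:ad-lin-bil} with $\tau = \Delta$ and $\bphi^* = (Z_1,\ldots,Z_m)$. From that standpoint \eqref{eq:H-Z} is the content of Corollary~\ref{cor:quad-phi} applied to the components of $\bphi^*$, while \eqref{eq:H-Delta} is the standard consistency of the adjoint linear problem; the two explicit combinations above are the direct witnesses of these two facts. The only mildly delicate point I anticipate is the bookkeeping of index shifts in the derivation of \eqref{eq:H-Delta}; no genuine algebraic obstacle is expected.
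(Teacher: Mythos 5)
Your proposal is correct and is essentially the paper's own (very terse) argument: the paper obtains both equations ``by eliminating either $Z_\ell$ or $\Delta$ from equations \eqref{eq:Z-Delta} for the three pairs $i<j$, $i<k$, $j<k$'', and your two weighted combinations (with weights $Z_{\ell(k)},-Z_{\ell(j)},Z_{\ell(i)}$, resp.\ $\Delta_{(ij)},-\Delta_{(ik)},\Delta_{(jk)}$ applied to the shifted equations) are precisely explicit witnesses of that elimination, with non-vanishing of the cancelled factors guaranteed by perfectness. One small caveat: when cancelling $Z_{\ell(ijk)}$ you should take $\ell\in\{i,j,k\}$ (or any $\ell$ with $n_\ell\geq 0$), since $Z_\ell$ vanishes identically at points with $n_\ell=-1$ so the leading-term argument fails there; as \eqref{eq:H-Delta} does not involve $\ell$, one such choice suffices.
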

\begin{Rem}
To make the paper self-contained we sketch yet another proof of the Proposition using determinantal identities.
Let us consider the determinant of the matrix $M_{(ijk)}(n)$ supplemented
at the bottom by the row with all zeros except $1$ at the last column of the $k$th group. By applying the
Sylvester identity with respect to last two rows and the last columns of the blocks $i$ and $j$ we obtain equation~\eqref{eq:H-Delta}. 
In order to prove this way equation \eqref{eq:H-Z} we discuss only the case $\ell\neq  i, j, k$, leaving other instances to the Reader. Consider determinant of
the matrix $M_{(ijk)}(n)$ but with the last row replaced by $X_{\ell(i,j,k)}(n,x)$, and supplemented at the bottom by
the row with all zeros except $1$ at the last column of the $k$th group. Then apply the Sylvester identity
with respect to the third row from the bottom and the last row, and the last columns of the blocks $i$ and $j$. 
\end{Rem}
\begin{Rem}
In application to Hermite--Pad\'{e} approximation problem equation \eqref{eq:H-Delta} can be found in \cite{Paszkowski-H} without any relation to integrability, but already for an arbitrary number of variables (actually, we refer to that work following \cite{BakerGraves-Morris}, where equations \eqref{eq:H-Delta} were given as equations (5.20) for $j=0$). It seems that in the context of Hermite--Pad\'{e} approximants equation \eqref{eq:H-Z} has not been considered before.
\end{Rem}

The idea of using the fundamental equation~\eqref{eq:Z-Delta} to fill the table of canonical Hermite--Pad\'{e} forms was used in~\cite{DD-DC-2} and can be implemented as follows. The set of canonical forms of degree $n$ with $|n|+m-1=K$ will be called $K$th level of the Hermite--Pad\'{e} table. The corresponding functions $\Delta$ of this level can be found as coefficients at $x^{K}$ of $Z_1 f_1 + \dots + Z_m f_m$ (lower order terms vanish, see equation \eqref{eq:Z-f-Delta}). Notice that equation~\eqref{eq:Z-Delta} implies that the forms of $K$th level can be found (up to boundary elements) using forms of level $(K-1)$ and functions $\Delta$ of level $(K-1)$ and $(K-2)$. The algorithm to fill out the Hermite--Pad\'{e} table up to level $N$ requires truncated series $f_k(x)$, and reads then as follows	
	\begin{itemize}
		\item[1.] Start with $\Delta(-1,\dots ,-1) = 1$, $Z_j(-1,\dots , -1,x) = 0$;
		\item[2.] For $K=0,1,2,\dots, N$:
		\begin{itemize}
			\item[a)] From equation \eqref{eq:Z-det} read $Z_j(-1, \dots , n_i=K, \dots , -1,x)= \delta_{ij}(f^i_0 x)^{K}$;
			\item[b)] Find the remaining polynomials $Z_\ell$ of the level $K$ by using equation \eqref{eq:Z-Delta};
			\item[c)] Find all $\Delta$ of this level using equation \eqref{eq:Z-f-Delta}. 
		\end{itemize}
	\end{itemize}

\begin{Rem}
Notice that to find $Z_\ell$ at the step 2b) any possible choice of indices $i<j$ such that $n_i, n_j\geq 0$ gives the correct result, which is due to the compatibility of equations \eqref{eq:Z-Delta}, i.e. the integrability of the Hirota system.
\end{Rem}
\begin{Rem}
The above algorithm is not of the form of the difference equation because at the step 2c) the higher order coefficients of the series are drown into the solution. At this step (i.e. level $K$) one needs polynomial approximations of the series up to terms of the order $K$.	
\end{Rem}
\begin{Rem}
Equation \eqref{eq:Z-Delta} can be also obtained by using the uniqueness (up to a scalar factor) of the solution of the Hermite--Pad\'{e} approximation problem~\cite{Paszkowski}. Such a reasoning is typical for analytic techniques of the soliton theory such as the finite-gap integration technique~\cite{Krichever-1} or the non-local $\bar{\partial}$ dressing method~\cite{AblowitzBarYaacovFokas,BogdanovManakov} and its descendants (like the Riemann--Hilbert or the inverse spectral transform techniques~\cite{NMPZ,AblowitzSegur}). 
	A part of the technique is derivation of solutions of the nonlinear equations by taking coefficients of expansion of solutions of the corresponding linear problems. For example, in equation \eqref{eq:Z-Delta} let us take $\ell = k$ different form $i,j$ and use the leading term coefficient formula \eqref{eq:leading-term-Z} to get \eqref{eq:H-Delta}.
\end{Rem}

\subsection{Equations on the initial planes} \label{sec:initial-Wynn}
To formulate the initial boundary value problem we fix two indices $i<j$, and in this Section we consider the special case when $n_k = -1$ for all indices $k\neq i,j$. For such initial planes the matrix $\mathcal{M}(-1,\dots , n_i, \dots , n_j, \dots , -1)$ consists of two blocks only --- those indexed by $i$ and $j$.  Correspondingly, by equation \eqref{eq:Z-det} we have $Z_\ell(-1,\dots,n_i,\dots,n_j,\dots,-1,x)\equiv 0$ for $\ell\neq i,j$. Without loss of generality one could fix $i=1$ and $j=2$, therefore we are left  with the Pad\'{e} problem for the quotient of two power series in the formalism of bigradients~\cite{Gragg}.

Application of the Sylvester identity for the two bottom rows and the first column of the block $i$ and the last column of the block $j$ of the determinants $Z_{i(ij)}$ and $Z_{j(ij)}$ (as other polynomials vanish there) gives on the initial planes
\begin{equation} \label{eq:F-2}
Z_{\ell(ij)}\Delta_{(-j)} = x Z_{\ell}\Delta_{(i)} - Z_{\ell(i)} \Delta, \qquad \ell=1,\dots ,m.
\end{equation}
Doing the same but for last column of the block $i$ and the first column of the block $j$ we obtain 
\begin{equation} \label{eq:F-3}
Z_{\ell(ij)}\Delta_{(-i)} = - x Z_{\ell}\Delta_{(j)} + Z_{\ell(j)} \Delta, \qquad \ell=1,\dots ,m.
\end{equation}

Equations \eqref{eq:F-2}, \eqref{eq:F-3} and  \eqref{eq:Z-Delta} are part of the well known Frobenius identities for the Pad\'{e} problem. The remaining ones, for example
\begin{align} \label{eq:F-ZZ}
Z_\ell^2 = & Z_{\ell(i)} Z_{\ell(-i)} +  Z_{\ell(j)} Z_{\ell(-j)} ,\\
\label{eq:F-XDZ}
xZ_\ell \Delta = & Z_{\ell(i)} \Delta_{(-i)} +  Z_{\ell(j)} \Delta_{(-j)}, \\ \label{eq:F-DD}
\Delta^2 = & \Delta_{(i)} \Delta_{(-i)} +  \Delta_{(j)} \Delta_{(-j)},
\end{align}
can be obtained from them. 
\begin{Rem}
Equation \eqref{eq:F-DD} within the theory of integrable systems was derived in \cite{Hirota-2dT} as the discrete-time form of the Toda chain equations~\cite{Toda-TL}. The Pad\'{e} problem provides only special solutions of the equation. Such solutions are supported in the first quadrant of the two dimensional integer lattice. In theory of integrable systems also other types of solutions are known, for example the finte chain, periodic or quasi-periodic solutions, the multisoliton solutions.
\end{Rem}
The above formulae imply also the ``missing identity of Frobenius" by Wynn~\cite{Wynn} for the ratio $R(x)=Z_i(x)/ Z_j(x) \in \Bbbk(x)$ in the field of rational functions over $\Bbbk$. We will discuss the identity in Section~\ref{sec:gen-Wynn} together with its geometric interpretation~\cite{Doliwa-Siemaszko-W}.

\subsection{The Paszkowski constraint} \label{sec:Paszkowski}
In \cite{Paszkowski} it was shown by Paszkowski, using analyticity and uniqueness properties of the solution of the Hermite--Pad\'{e} problem, that the canonical Hermite--Pad\'{e} form satisfies the following equation
\begin{equation}
\label{eq:Paszkowski-Frobenius}
x {Z_\ell} \Delta = Z_{\ell(1)} \Delta_{(-1)} + \dots + Z_{\ell(m)} \Delta_{(-m)}, \qquad \ell = 1, \dots , m, 
\end{equation}
the corresponding generalization of the Frobenius identity~\eqref{eq:F-XDZ}. In Section~\ref{sec:integrability} we will consider the admissibility of the above Paszkowski condition in the wider context of integrable reductions of the Hirota system and its linear problem.
\begin{Rem}
	Below we demonstrate validity of the equation by using determinantal identities.
\end{Rem}
\begin{Rem}
The above identity (and its extensions \cite{Paszkowski}) can be used to obtain a particular element of the Hermite--Pad\'{e} table. Instead of filling the table level by level one can reach the element by moving along a specific path, which speeds up the calculation.
\end{Rem}

From the leading term of the main approximation condition \eqref{eq:Z-f-Delta} in equation~\eqref{eq:Paszkowski-Frobenius} one can find~\cite{Paszkowski-H}, the corresponding analog of the Frobenius identity \eqref{eq:F-DD}
\begin{equation} \label{eq:Paszkowski-D}
\Delta^2 = \Delta_{(1)} \Delta_{(-1)} + \dots + \Delta_{(m)} \Delta_{(-m)}, 
\end{equation}
derived later in \cite{BakerGraves-Morris} by using determinantal identities. The $\Delta \leftrightarrow Z_\ell$ symmetry of the linear problems \eqref{eq:Z-Delta} and \eqref{eq:Paszkowski-Frobenius} suggests that an equation analogous to \eqref{eq:Paszkowski-D} should  be satisfied also by $Z_\ell$.  
\begin{Prop} \label{prop:ZZ}
	The components $Z_\ell$ of the canonical Hermite--Pad\'{e} form satisfy the quadratic equation
\begin{equation}
\label{eq:Paszkowski-Z}
{Z_\ell}^2 = Z_{\ell(1)} Z_{\ell(-1)} + \dots + Z_{\ell(m)} Z_{\ell(-m)}, \qquad \ell = 1, \dots , m, 
\end{equation}	
which is the analog of the Frobenius identity \eqref{eq:F-ZZ}.
\end{Prop}
\begin{proof}
We follow the idea of~\cite{BakerGraves-Morris} applied there in derivation of equation \eqref{eq:Paszkowski-D}. We will use notation introduced in Section~\ref{sec:H-P}, where
\begin{equation}
Z_\ell = \left| \begin{matrix} \mathcal{M} \\X_\ell \end{matrix} \right| , \qquad \Delta = \left| \begin{matrix} \mathcal{M} \\ b_N \end{matrix} \right|, \qquad  N=|n| + m -1. 
\end{equation}
As a warming-up let us first derive in this way equation \eqref{eq:Paszkowski-Frobenius}, as promised. Consider the determinant 
\begin{equation*}
\mathcal{D}_\ell = \left| \begin{array}{c|c} \mathcal{M} & 0_{N+1} \\
b_{N} & \mathcal{M} \\ \hline
X_\ell & x\, X_\ell \\
0^N_{N+1} & \mathcal{M}
\end{array} \right| = (-1)^N  x\Delta Z_{\ell} ,
\end{equation*}
where $0_{N+1}$ is a row vector with $N+1$ zeros and  $0^N_{N+1}$ is $N\times (N+1)$ array of zeros. The equality follows by first using elementary row operations to convert the upper right block to a block of zeros, then moving $(N+2)$-th row to the bottom, and finally by applying the generalized Laplace expansion with respect to the first $(N+1)$ columns. 

By elementary column operations we can create zeros in all of the top $(N+2)$ entries in every one of the colums in the right half of $\mathcal{D}_\ell$ except for the last ones in each of $m$ blocks. In the generalized Laplace expansion with respect to first $N+2$ rows the non-vanishing terms arise from determinants formed by the rows and the first $N+1$ columns supplemented by the last ones in each of the blocks. After transpositions, which produce the correct sign, such a determinant formed from $k$-th block gives the polynomial $Z_{\ell(k)}$, while the corresponding multiplier being the determinant of order $N$ gives $\Delta_{(-k)}$, what results in
\begin{equation*}
\mathcal{D}_\ell = (-1)^N  \left( Z_{\ell(1)} \Delta_{(-1)} + \dots +  Z_{\ell(m)} \Delta_{(-m)} \right).
\end{equation*}

Let us proceede to the proof of the main identity \eqref{eq:Paszkowski-Z}. Denote by $\mathcal{M}^\prime$ the matrix $\mathcal{M}$ with the last row $b_{N-1}$ removed, and consider the following determinant
\begin{equation}
(-1)^{N-1}  Z_{\ell}^2= \left| \begin{array}{c|c} 
\mathcal{M}& \begin{matrix} 0_{N+1} \\ \mathcal{M}^\prime  \end{matrix}\\
X_\ell & x\, X_\ell \\ \hline
b_{N} & b_{N-1} \\ 
0^N_{N+1} & \begin{matrix} \mathcal{M}^\prime \\X_\ell \end{matrix}
\end{array} \right| = (-1)^{N-1}  \left(  Z_{\ell(1)} Z_{\ell(-1)} + \dots + Z_{\ell(m)} Z_{\ell(-m)} \right).
\end{equation}
The left equality follows again from converting the upper right block to a block of zeros, but then moving $(N+2)$-th row to the penultimate one. To get the right equality we apply the column operations as before, but first we exchange the rows $(N+1)$ and $(N+2)$.
\end{proof}
\begin{Rem}
To complete presentation of the analogs of the basic Frobenius identities we add to the list equations
	\begin{equation} \label{eq:Frobenius-HP-Z}
	xZ_\ell \Delta_{(i)} = Z_{\ell(i)} \Delta + \sum_{j=1}^m \mathrm{sgn}(j-i) Z_{\ell(ij)} \Delta_{(-j)}, \qquad i,\ell = 1,\dots , m, 
	\end{equation}
\begin{equation*}
\mathrm{sgn}(k) = \begin{cases}
1 & k>0, \\
0 \qquad  \text{for} & k=0,\\
-1&  k< 0,
\end{cases}
\end{equation*}
	which generalize the Frobenius identities \eqref{eq:F-2}-\eqref{eq:F-3}. The Reader can check that the additional linear problems \eqref{eq:Frobenius-HP-Z} can be also obtained using the analytic properties of the polynomials $Z_\ell$ and assumption about the uniqueness of the solution. We will also derive the equations in  more general context in Section~\ref{sec:integrability}.	
\end{Rem}

\section{Geometry of the Hermite--Pad\'{e} approximation}
\label{sec:geometry-HP}
\subsection{The collinearity condition and the local geometric construction}
According to the geometric meaning~\cite{Dol-Des} of the Hirota system let us interpret the canonical Hermite--Pad\'{e} form of degree $n$ as \emph{homogeneous coordinates} 
\begin{equation}
P(n,x) = [Z_1(n,x) : \ldots : Z_m(n,x)]
\end{equation} of the point $P(n,x)$ in $(m-1)$ dimensional projective space over the field of rational functions $\Bbbk(x)$. 
\begin{Rem}
Actually, in the context of Hermite--Pad\'{e} approximation it also makes sense to consider the field of Laurent series $\sum_{i\geq i_0}^\infty a_i x^i$ with indices of the coefficients bounded from below, which contains $\Bbbk(x)$ as a proper subfield.	
\end{Rem} 

Equations~\eqref{eq:Z-Delta} of the linear system mean that the point $P(n,x)$ and its $m$ backward neighbours $P_{(-i)}(n,x) = P(n_1, \dots , n_i -1, \dots , n_m,x)$, $i=1,\dots , m$, are collinear; let us denote the line by $L(n,x)$. The local construction step is based on the observation that the collinearity constraints imply that any five out of the six points $P_{(i)}$, $P_{(j)}$, $P_{(k)}$, $P_{(ij)}$, $P_{(ik)}$, $P_{(jk)}$ determine the sixth one. The points and four corresponding lines $L_{(ij)}$, $L_{(ik)}$, $L_{(jk)}$, $L_{(ijk)}$ form the so called Veblen configuration (here as usual all the three indices $i,j,k$ are distinct), see Figure~\ref{fig:R-Veblen-PL}. 
\begin{figure}[h!]
	\begin{center}
		\includegraphics[width=6cm]{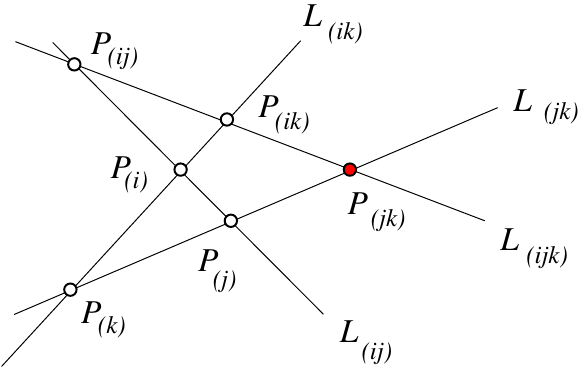}
	\end{center}
	\caption{Local geometric basic construction step of the projective solution to the Hermite--Pad\'{e} problem}
	\label{fig:R-Veblen-PL}
\end{figure}
Notice that the construction involves three dimensions of the discrete parameters. For $m>3$ there is not unique way to contruct the points but due to the multidimensional consistency of the Hirota system all the ways give the same result~\cite{Dol-Des}. Geometrically, the consistency is equivalent to the Desargues theorem of projective geometry and its multidimensional generalizations~\cite{Dol-AN}.

\subsection{The generalized Wynn recurrence} \label{sec:gen-Wynn}
The homogeneous coordinates of a point in projective space are not given uniquely. 
 We can uniquely represent the points $P(n)$ of the projective $(m-1)$ dimensional space in terms of \emph{non-homogeneous coordinates} $(R_1(n,x),\dots ,R_{m-1}(n,x))$, where $R_i(n,x) = Z_{i+1}(n,x)/Z_1(n,x) \in \Bbbk(x)$, $i=1,\dots ,m-1$, are rational functions. Recall that a fixed system of non-homogeneous coordinates in projective space does not cover the whole space. The corresponding hyperplane at infinity, where the above non-homogeneous coordinate system cannot be used, is given by $Z_1 = 0$. In such a case one can change the system to another one by using other non-zero polynomial $Z_j$, $j\neq 1$, as the denominator. 

In \cite{Dol-Des} it was shown (for arbitrary field, including the non-commutative case as well) that the resulting non-homogeneous coordinates satisfy the discrete Schwarzian Kadomtsev--Petviashvili (dSKP) equation; see also \cite{FWN-Capel,BoKo-N-KP}. To make the paper self-contained let us recall the result and its proof (given here in turn for the adjoint linear problem). We will give also another geometric interpretation of the equation in terms of the so called quadrangular set of points \cite{DoliwaKosiorek} which is valid for arbitrary field; see also \cite{KoSchief-Men} for the complex field interpretation in the context of the Menelaus theorem and of the Clifford configuration of circles in inversive geometry. 
\begin{Prop}
Given two non-trivial scalar solutions $Z$ and $\tilde{Z}$ of the linear system \eqref{eq:Z-Delta} then for any triple $i<j<k$ of indices the rational function $R =Z / \tilde{Z} \in \Bbbk(x)$ satisfies the dSKP equation
	\begin{equation} \label{eq:dSKP-R}
	\frac{(R_{(ij)} -  R_{(j)})  (R_{(jk)} - R_{(k)}) (R_{(ik)} -  R_{(i)})}{(R_{(ij)} -  R_{(i)})  (R_{(jk)} - R_{(j)}) (R_{(ik)} -  R_{(k)})}=1 .
	\end{equation}
\end{Prop}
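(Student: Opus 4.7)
The plan is a direct computation using the linear system \eqref{eq:Z-Delta} to reduce the ratio in \eqref{eq:dSKP-R} to a telescoping product.

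First I would introduce the ``Wronskian-type'' combination $W_{ij} = Z_{(i)}\tilde{Z}_{(j)} - Z_{(j)}\tilde{Z}_{(i)}$, which is antisymmetric in $i,j$. Writing $R = Z/\tilde{Z}$ and bringing to a common denominator,
\begin{equation*}
R_{(ij)} - R_{(j)} = \frac{Z_{(ij)} \tilde{Z}_{(j)} - Z_{(j)} \tilde{Z}_{(ij)}}{\tilde{Z}_{(ij)}\tilde{Z}_{(j)}}, \qquad R_{(ij)} - R_{(i)} = \frac{Z_{(ij)} \tilde{Z}_{(i)} - Z_{(i)} \tilde{Z}_{(ij)}}{\tilde{Z}_{(ij)}\tilde{Z}_{(i)}},
\end{equation*}
I would substitute into the numerators the expressions for $Z_{(ij)}$ and $\tilde{Z}_{(ij)}$ coming from \eqref{eq:Z-Delta}, namely $Z_{(ij)}\Delta = Z_{(j)}\Delta_{(i)} - Z_{(i)}\Delta_{(j)}$ and the same for $\tilde{Z}$. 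A short cancellation of the $Z_{(j)}\tilde{Z}_{(j)}\Delta_{(i)}$ (respectively $Z_{(i)}\tilde{Z}_{(i)}\Delta_{(i)}$) terms yields
\begin{equation*}
R_{(ij)} - R_{(j)} = -\frac{\Delta_{(j)}\, W_{ij}}{\Delta\, \tilde{Z}_{(ij)}\tilde{Z}_{(j)}}, \qquad R_{(ij)} - R_{(i)} = -\frac{\Delta_{(i)}\, W_{ij}}{\Delta\, \tilde{Z}_{(ij)}\tilde{Z}_{(i)}}.
\end{equation*}

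Next I would form the quotient; the factor $W_{ij}$ (which is the only quantity still involving both $Z$ and $\tilde{Z}$), the factor $\tilde{Z}_{(ij)}$, and the factor $\Delta$ all cancel, leaving the clean identity
\begin{equation*}
\frac{R_{(ij)} - R_{(j)}}{R_{(ij)} - R_{(i)}} = \frac{\Delta_{(j)}\, \tilde{Z}_{(i)}}{\Delta_{(i)}\, \tilde{Z}_{(j)}}.
\end{equation*}
Writing the analogous expressions for the $(jk)$ and $(ik)$ pairs (with the same derivation, only the index names differing), the left-hand side of \eqref{eq:dSKP-R} becomes the product
\begin{equation*}
\frac{\Delta_{(j)}\tilde{Z}_{(i)}}{\Delta_{(i)}\tilde{Z}_{(j)}} \cdot \frac{\Delta_{(k)}\tilde{Z}_{(j)}}{\Delta_{(j)}\tilde{Z}_{(k)}} \cdot \frac{\Delta_{(i)}\tilde{Z}_{(k)}}{\Delta_{(k)}\tilde{Z}_{(i)}},
\end{equation*}
in which every $\Delta_{(\cdot)}$ and every $\tilde{Z}_{(\cdot)}$ appears once in the numerator and once in the denominator, so the product telescopes to $1$.

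The computation is essentially routine; the only subtlety I would watch for is the sign accounting in the derivation of the two displayed formulas above (both differences carry the same minus sign, which is what produces a plus sign in their ratio and ultimately the identity $=1$ rather than $=-1$). The nontriviality assumption on $Z$ and $\tilde{Z}$ is used implicitly to ensure that the denominators $\tilde{Z}_{(i)}, \tilde{Z}_{(j)}, \tilde{Z}_{(k)}, \tilde{Z}_{(ij)}$, etc., do not vanish generically, so that the passage to ratios is legitimate in $\Bbbk(x)$.
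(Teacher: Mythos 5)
Your proof is correct and is essentially the paper's argument: the paper simply records the relation $\tilde{Z}_{(j)}\Delta_{(i)}(R_{(ij)}-R_{(j)})=\tilde{Z}_{(i)}\Delta_{(j)}(R_{(ij)}-R_{(i)})$ --- your ratio identity with denominators cleared --- and multiplies its three index versions so that everything telescopes to $1$. The only blemish is the parenthetical naming of the cancelled term in your second difference, which should be $Z_{(i)}\tilde{Z}_{(i)}\Delta_{(j)}$ rather than $Z_{(i)}\tilde{Z}_{(i)}\Delta_{(i)}$; your displayed formulas are correct as written.
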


\begin{proof}
	Equation~\eqref{eq:Z-Delta} implies that the ratio $R =Z / \tilde{Z} \in \Bbbk(x)$ satisfies 
	\begin{equation}
	\tilde{Z}_{(j)} \Delta_{(i)} (R_{(ij)} -  R_{(j)}) = \tilde{Z}_{(i)} \Delta_{(j)} (R_{(ij)} - R_{(i)}) , \qquad i<j,
	\end{equation}
	which when multiplied by its versions for pairs $j<k$ and $i<k$ 
	\begin{align}
	\tilde{Z}_{(k)} \Delta_{(j)} (R_{(jk)} -  R_{(k)}) = & \tilde{Z}_{(j)} \Delta_{(k)} (R_{(jk)} - R_{(j)}) , \qquad j<k , \\
	\tilde{Z}_{(i)} \Delta_{(k)} (R_{(ik)} -  R_{(i)}) = & \tilde{Z}_{(k)} \Delta_{(i)} (R_{(ik)} - R_{(k)}) , \qquad i<k,
	\end{align}
	gives the dSKP equations.
\end{proof}
\begin{Rem}
	Equation \eqref{eq:dSKP-R} can be solved for any of six constituent functions, in particular
	\begin{equation} \label{eq:R-HP}
	R_{(jk)} = \frac{ R_{(k)} (R_{(ij)} -  R_{(j)}) (R_{(ik)} - R_{(i)}) - R_{(j)}	(R_{(ij)} -  R_{(i)})  (R_{(ik)} -  R_{(k)}) }	
	{ (R_{(ij)} -  R_{(j)}) (R_{(ik)} -  R_{(i)}) - 
		(R_{(ij)} -  R_{(i)})  (R_{(ik)} -  R_{(k)}) } .
	\end{equation}
\end{Rem} 

In \cite{DoliwaKosiorek} it was shown that the above equation can be interpreted as a relation between six points of the projective line (here over the field $\Bbbk(x)$ of rational functions) and is equivalent to the following construction of $R_{(jk)}$ once the points $R_{(i)}$, $R_{(j)}$, $R_{(k)}$, $R_{(ij)}$ and $R_{(ij)}$ are given (see Figure \ref{fig:R-geometry}):

\begin{itemize}
	\item select any point $A$ outside the base line,
	\item on the line $\langle A, R_{(j)} \rangle$ select any point $B$ different from $A$ and $R_{(j)}$,
	\item define point $C$ as the intersection of lines $\langle A, R_{(i)} \rangle$ and $\langle B, R_{(k)} \rangle$,
	\item define point $D$ as the intersection of lines $\langle C, R_{(ik)} \rangle$ and $\langle A, R_{(ij)} \rangle$,
	\item point $R_{(jk)}$ is the intersection of the line  $\langle B, D \rangle$ with the base line.
\end{itemize}
\begin{figure}
	\begin{center}
		\includegraphics[width=9cm]{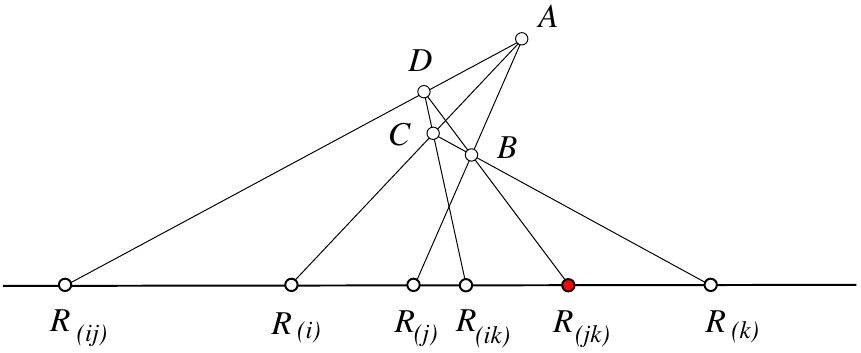}
	\end{center}
	\caption{Geometric meaning of the generalization of the Wynn recurrence (discrete Schwarzian Kadomtsev--Petviashvili equation)}
	\label{fig:R-geometry}
\end{figure}

Equation \eqref{eq:dSKP-R} can be called the generalized Wynn recurrence, see equation \eqref{eq:Wynn-R} for the standard Wynn identity. Such an interpretation will be important in discussion of the geometric initial boundary value problem. Recall from Section~\ref{sec:initial-Wynn} that on the initial plane indexed by the pair $i<j$ all other discrete parameters $n_k=-1$, $k\neq i,j$. The rational function \begin{equation}
R(n_i,n_j,x)=\frac{Z_j(-1, \dots , n_i, \dots , n_j, \dots , -1,x)}{Z_i(-1, \dots , n_i, \dots , n_j, \dots , -1,x)}
\end{equation}
 is a solution of the classical Pad\'{e} problem for the series $f(x) = -f_i(x)/f_j(x)$
\begin{equation}
R(n_i,n_j,x) = f(x) + O(x^{n_i + n_j + 1}).
\end{equation}
Notice that under such identification $n_i$ is the degree of the \emph{denominator} while $n_j$ is the degree of the \emph{numerator}.
It is well known \cite{BakerGraves-Morris} that such a rational function satisfies the Wynn identity~\cite{Wynn}
\begin{equation} \label{eq:Wynn-R}
(R_{(i)}-R)^{-1} + (R_{(-i)}-R)^{-1} = (R_{(j)}-R)^{-1} + (R_{(-j)}-R)^{-1}, 
\end{equation}
with the boundary data $R(n_i,-1,x) \equiv 0$, $R(-1, n_j,x) \equiv \infty$, for $n_i, n_j \geq 0$, and $R(0, n_j) = \lfloor f \rfloor_{n_j}$ is the polynomial approximation of $f$ up to order $n_j$. 

Interpreted as a relation between five points of the projective (base) line over the field of rational functions the recurrence gives the following construction~\cite{Doliwa-Siemaszko-W} of $R_{(j)}$ once the points $R_{(-i)}$, $R_{(-j)}$, $R$ and $R_{(i)}$ are given (see Figure \ref{fig:Wynn-geometry}):
\begin{figure}[h!]
	\begin{center}
		\includegraphics[width=9cm]{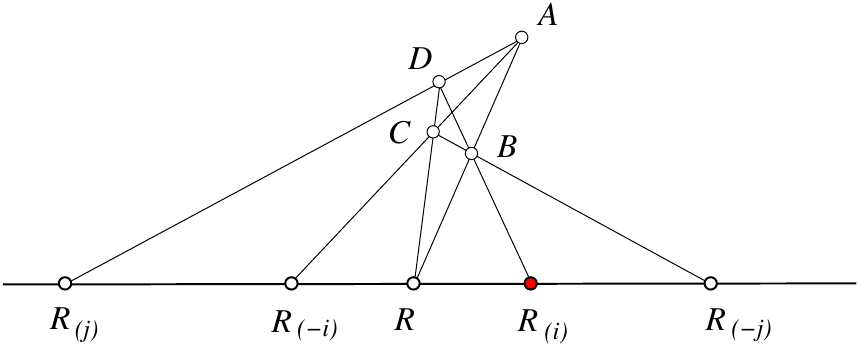}
	\end{center}
	\caption{Geometric meaning of the Wynn recurrence}
	\label{fig:Wynn-geometry}
\end{figure}
\begin{itemize}
	\item select any point $A$ outside the base line,
	\item on the line $\langle A, R \rangle$ select any point $B$ different from $A$ and $R$,
	\item define point $C$ as the intersection of lines $\langle A, R_{(-i)} \rangle$ and $\langle B, R_{(-j)} \rangle$,
	\item define point $D$ as the intersection of lines $\langle C, R \rangle$ and $\langle A, R_{(j)} \rangle$,
	\item point $R_{(i)}$ is the intersection of the line  $\langle B, D \rangle$ with the base line.
\end{itemize}
\begin{Rem}
Formally, the standard Wynn recurrence \eqref{eq:Wynn-R} can be obtained~\cite{Doliwa-Siemaszko-W} from equation \eqref{eq:dSKP-R} in three discrete variables $n_i$, $n_j$ and $n_k$ by symmetry reduction $R_{(j)} = R_{(ik)}$ (and reindexing the variables); compare Figures~\ref{fig:R-geometry} and \ref{fig:Wynn-geometry}. The same formal procedure holds~\cite{Bialecki} also for reduction from the Hirota equation~\eqref{eq:H-Delta} to the discrete-time Toda chain equation~\eqref{eq:F-DD}.
\end{Rem}

\subsection{The incidence geometric construction from the initial data}
Notice that when $n_j=-1$ then formula \eqref{eq:Z-det} implies that the corresponding homogeneous coordinate $Z_j$ equals to zero. Therefore for fixed index $i$, and $n_j = -1$ for $j\neq i$, then for all $n_i\geq 0$ we have that $P(-1, \dots , n_i, \dots ,-1)$ is the same point, which we denote by $P_i$. Similarly, for $i\neq j$ all points $P(-1,\dots , n_i, \dots , n_j , \dots , -1)$, with $n_i,n_j\geq 0$ and $n_k=-1$ for $k\neq i,j$ belong to the line $L_{ij} = \langle P_i , P_j \rangle$. 
We say that the points $P(n)$, with $|n|+m-1=N$ form $N$th level of the construction. Notice that knowledge of level $N$ points implies knowledge of lines $L(n)$ with $|n|+m=N$. 

As the initial data let us take points 
\begin{equation}
P(n_1,0,-1, \dots, -1) \in L_{12}, \quad \dots, \quad P(n_1,-1, \dots , -1 ,0) \in L_{1m}, \qquad n_1 \geq 0.
\end{equation} 
The points $P(n_1, -1, \dots , n_j, \dots , -1) \in L_{1j}$ also belong the same lines, and can be obtained by the Wynn recurrence using equation~\eqref{eq:Wynn-R} or the geometric construction visualized on Figure~\ref{fig:Wynn-geometry}. The points with three discrete variables non-negative can be obtained subsequently by the Veblen configuration construction using equation~\eqref{eq:dSKP-R} and Figure~\ref{fig:R-Veblen-PL}.

Let us present details of the construction in the basic case $m=3$. We start with three points $P_1=[1:0:0]$, $P_2=[0:1:0]$, $P_3=[0:0:1]$ and the corresponding three lines $L_{12} = \langle P_1, P_2 \rangle$, $L_{13} = \langle P_1, P_3 \rangle$ and $L_{23} = \langle P_2, P_3 \rangle$. We visualize the mechanism on example of the first three levels, see the corresponding diagrams in Figure~\ref{fig:Initial-levels}; each level is divided into rows depending on the value of the first variable. Then we present the induction step into the level $N+1$.

\medskip
\noindent Level $1$: \\
Read: $P(0,0,-1) \in L_{12}$, $P(0,-1,0) \in L_{13}$;\\*
\indent $L(0,0,0) = \langle P(0,0,-1), P(0,-1,0) \rangle$, \\*
Row $1$: \\*
\indent $P(-1,0,0) = L(0,0,0)\cap L_{23}$.\\*
----- \\
Level $2$: \\*
Read: $P(1,0,-1) \in L_{12}$, $P(1,-1,0) \in L_{13}$;\\*
\indent $L(1,0,0) = \langle P(1,0,-1), P(1,-1,0) \rangle$, \\
Row $1$:\\* 
\indent find $P(0,1,-1)$ by the Wynn recurrence construction (in $n_1$ and $n_2$ variables), \\
\indent $P(0,0,0) = L(1,0,0)\cap L(0,0,0)$ \\
\indent find $P(0,-1,1)$  by the Wynn recurrence construction (in $n_1$ and $n_3$  variables), \\
\indent $L(0,1,0) = \langle P(0,1,-1), P(0,0,0) \rangle$, \\
\indent $L(0,0,1) = \langle P(0,0,0), P(0,-1,1) \rangle$; \\
\noindent Row $2$: \\*
\indent $P(-1,1,0) = L(0,1,0)\cap L_{23}$, \\
\indent $P(-1,0,1) = L(0,0,1)\cap L_{23}$. \\*
-----















\smallskip
\noindent $\vdots$
\smallskip \\*
-----

\begin{figure}
	\begin{center}
		\includegraphics[width=6cm]{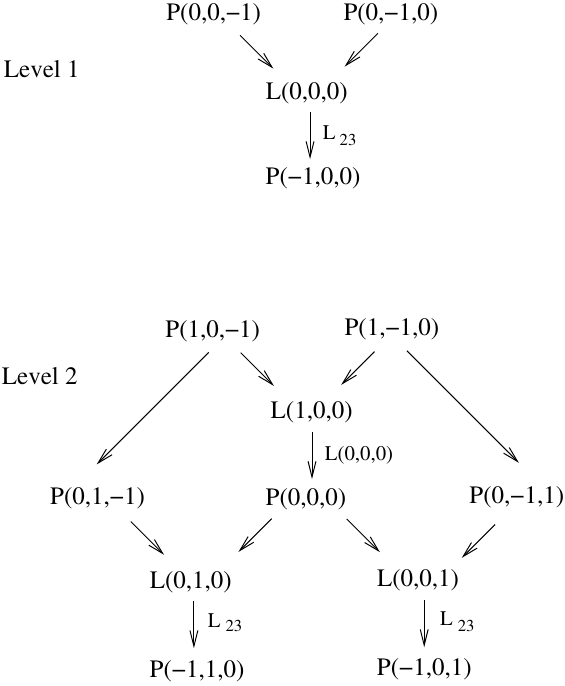} \hskip5mm 
		\includegraphics[width=8cm]{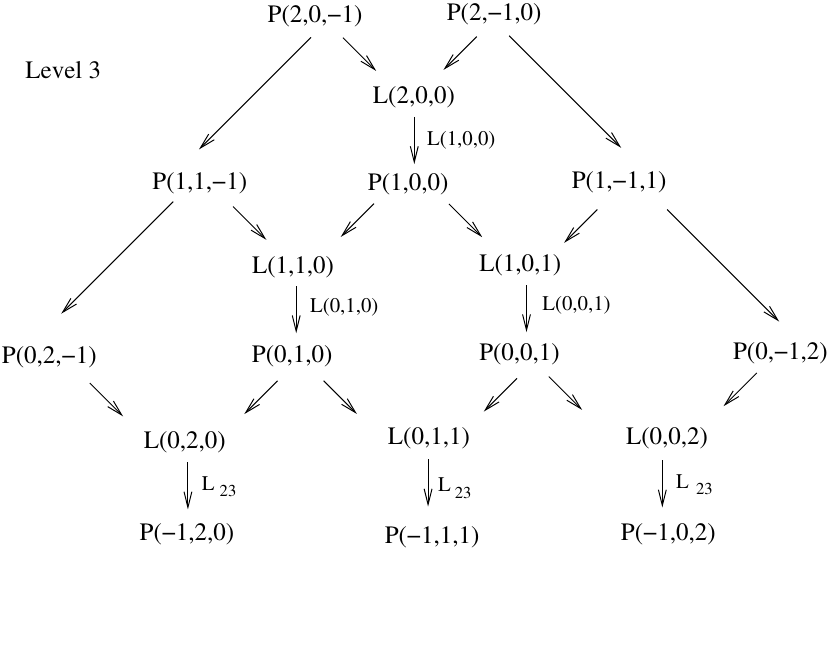}
	\end{center}
	\caption{First three levels of the geometric construction}
	\label{fig:Initial-levels}
\end{figure}

\noindent Level $N+1$:\\*
Read: $P(N,0,-1) \in L_{12}$, $P(N,-1,0) \in L_{13}$;\\*
\indent $L(N,0,0) = \langle P(N,0,-1), P(N,-1,0) \rangle$, \\*
for $k=1,\dots , N$: \\*
\indent Row $k$:\\*
\indent find $P(N-k,k,-1)$ by the Wynn recurrence construction (in $n_1$ and $n_2$ variables); \\
\indent for $\ell = 1,\dots , k$: \\*
\indent\indent $P(N-k ,k-\ell,\ell - 1 ) = L(N-k+1 ,k-\ell,\ell - 1)\cap L(N-k ,k-\ell, \ell - 1)$,\\
\indent\indent $L(N-k ,k-\ell + 1,\ell -1 ) =  
\langle P(N-k ,k-\ell +1 ,\ell - 2 ), P(N-k ,k-\ell,\ell - 1) \rangle$, \\
\indent find $P(N-k,-1,k)$ by the Wynn recurrence construction (in $n_1$ and $n_3$  variables); \\*
\indent $L(N-k ,0, k ) =  
\langle P(N-k ,0 ,k-1 ), P(N-k ,-1 ,k) \rangle$; \\
Row $N+1$:\\*
for $\ell=0,\dots ,N$: \\*
\indent $P(-1,N-\ell, \ell) = L(0,N - \ell, \ell)\cap L_{23}$.\\*
-----

\begin{Rem} 
In order to find the point $P(n_1, n_2+1,-1)$ using the geometric construction behind the Wynn recurrence (see Figure~\ref{fig:Wynn-geometry}), 
it is convenient to select for the construction as point $A=P(n_1,n_2-1,0)$. At the next step select point $B=P(n_1-1,n_2,0)$ on the line $L(n_1,n_2,0) = \langle P(n_1,n_2,-1), A \rangle$. Such a choice gives point $C=P(n_1 - 1, n_2 -1, 0)$. The corresponding point $D=D(n_1, n_2 + 1, -1)$ cannot be (generically) identified with any of the points $P(n)$. Analogous choice can be made in construction of the point $P(n_1, -1, n_3+1)$.
\end{Rem}

\begin{Rem}
The points $P(-1,n_2,n_3) \in L_{23}$ by consistency of the construction are subject to the geometric Wynn constraint (in $n_2$ and $n_3$ variables). Notice that when the polynomials $Z_1(n_1,n_2,n_3,x)$ serve as the denominators in the construction of non-homogeneous coordinats (see Section~\ref{sec:gen-Wynn}) then $L_{23}$ plays the role of the line at infinity. Therefore to see the Wynn recurrence equations~\eqref{eq:Wynn-R} on the $L_{23}$ line the initial non-homogeneus coordinate system has to be replaced by another one.
\end{Rem}

\section{Integrability of the Hirota equations with the Paszkowski constraint} \label{sec:integrability}
The determinantal structure of solutions of the Frobenius identity \eqref{eq:F-DD} corresponds, within the context of soliton theory, to \emph{semi-infinite} chain case of the 
discrete-time Toda chain equations. The equation itself is valid however for broader class of boundary conditions and its application is not restricted to the Pad\'{e} approximation only. Let us discuss therefore the equations of Section~\ref{sec:mult-eq-HP} within such more general, i.e. not restricted to the determinantal solutions and the Hermite--Pad\'{e} problem context. We go back therefore to the notation used in Section~\ref{sec:Hirota}.

According to the general approach of the soliton theory \cite{AblowitzSegur,NMPZ} nonlinear integrable equations are obtained as compatibility conditions of the corresponding system of linear equations. Moreover, admissible reductions of a given nonlinear integrable equation should arise from additional constraints imposed on the level of the linear system. Motivated by equations satisfied by the Hermite--Pad\'{e} approximants 
we expect that the $\bphi^*-\tau$ form of the Paszkowski constraint \eqref{eq:Paszkowski-Frobenius} combined with the linear system \eqref{eq:ad-lin-bil} satisfied by $\bphi^*$ should result in additional, apart from the Hirota system \eqref{eq:H-M}, nonlinear difference equation of the form \eqref{eq:Paszkowski-D} satisfied by the $\tau$--function. Moreover, by Proposition~\ref{prop:ZZ} we may foresee in such a case an additional equation satisfied by components $\phi^*_\ell$ of the wave function.

Let us start  from the basic (adjoint) linear system~\eqref{eq:ad-lin-bil} together with its compatibility condition \eqref{eq:H-M}. The following Lemma, can be verified by direct calculation. 
\begin{Lem} \label{lem:CDE}
	Define functions 
	\begin{align}
	C \, = & \tau^2 - \tau_{(1)} \tau_{(-1)} - \dots - \tau_{(m)} \tau_{(-m)}, \\
	D_\ell = & x {\phi^*_\ell} \tau - \phi^*_{\ell(1)} \tau_{(-1)} - \dots - \phi^*_{\ell(m)} \tau_{(-m)}, \qquad \ell = 1, \dots , m, \\
	E_\ell = & \phi^{*2}_\ell - \phi^*_{\ell(1)} \phi^*_{\ell(-1)} - \dots - \phi^*_{\ell(m)} \phi^*_{\ell(-m)}, 
	\end{align}
	then under assumption of the linear problem \eqref{eq:ad-lin-bil} only
	\begin{equation} \label{eq:E-D-C}
	\tau^2 E_{\ell(i)} - \tau \phi^*_\ell D_{\ell(i)} \stackrel{\eqref{eq:ad-lin-bil}}{=} \phi^{*2}_{\ell(i)} C - \tau_{(i)} \phi^*_{\ell(i)} D_\ell. 
	\end{equation}
\end{Lem}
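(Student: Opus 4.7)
The plan is a direct algebraic verification from the definitions. Substitute the explicit formulas for $C$, $D_\ell$, $E_\ell$ (together with their $(i)$-shifts) into
$$F := \tau^2 E_{\ell(i)} - \tau\phi^*_\ell D_{\ell(i)} - \phi^{*2}_{\ell(i)} C + \tau_{(i)}\phi^*_{\ell(i)} D_\ell,$$
and show $F=0$. First one observes that two families of terms cancel before the linear problem is invoked at all: the contributions $\pm\tau^2\phi^{*2}_{\ell(i)}$ arising from the leading monomials of $\tau^2 E_{\ell(i)}$ and $\phi^{*2}_{\ell(i)} C$, and the two $x$-terms coming from the leading monomials of $D_{\ell(i)}$ and $D_\ell$, both equal to $\pm x\,\tau\tau_{(i)}\phi^*_\ell\phi^*_{\ell(i)}$.

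The remainder splits as a sum $F=\sum_{k=1}^m S_k$ indexed by the summation variable common to all three constraints, and each summand admits the factorization
$$S_k=\tau\phi^*_{\ell(ik)}\bigl[\phi^*_\ell\tau_{(i,-k)}-\tau\phi^*_{\ell(i,-k)}\bigr]+\phi^*_{\ell(i)}\tau_{(-k)}\bigl[\phi^*_{\ell(i)}\tau_{(k)}-\tau_{(i)}\phi^*_{\ell(k)}\bigr].$$
The diagonal term $k=i$ is settled immediately because $\phi^*_{\ell(i,-i)}=\phi^*_\ell$ and $\tau_{(i,-i)}=\tau$ make each bracket vanish individually. For $k\neq i$, the second bracket is, up to a sign determined by the ordering of $i$ and $k$, the bilinear adjoint relation \eqref{eq:ad-lin-bil} evaluated on the pair $\{i,k\}$, so it equals $\mp\phi^*_{\ell(ik)}\tau$; the first bracket is the same relation shifted by $(-k)$, and with the parallel sign convention it equals $\pm\phi^*_{\ell(i)}\tau_{(-k)}$. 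Multiplying out, the two resulting products in $S_k$ carry opposite signs and cancel.

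The main obstacle is purely bookkeeping: one must verify that the sign flips of the two brackets under the exchange $i<k\leftrightarrow i>k$ always correlate so as to produce cancellation rather than reinforcement. Structurally this correlation reflects the direction-reversal symmetry of the Hirota system recalled after \eqref{eq:lin-dKP*}, which exchanges the linear and adjoint problems and so acts naturally on the constraints $C$, $D_\ell$, $E_\ell$; no further structure of $\tau$ or $\phi^*_\ell$ beyond \eqref{eq:ad-lin-bil} enters the argument.
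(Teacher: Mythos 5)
Your direct calculation is correct and is essentially the route the paper intends (it leaves the Lemma ``to be verified by the Reader by direct calculation'', hinting only at the intermediate identity \eqref{eq:additional-lin}, which amounts to the same grouping of terms via $\tau_{(i)}\phi^*_{\ell(k)}-\phi^*_{\ell(i)}\tau_{(k)}=\mathrm{sgn}(k-i)\,\phi^*_{\ell(ik)}\tau$). The sign bookkeeping you flag does work out: both brackets in $S_k$ are instances of that same antisymmetric relation for the pair $\{i,k\}$ (one shifted by $(-k)$), so they flip sign together under $i<k\leftrightarrow i>k$ and the two products always cancel.
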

\begin{Rem}
	In checking the Lemma it is useful to derive first the following simpler identity
	\begin{equation} \label{eq:additional-lin}
	\phi^*_{\ell(i)} C - \tau_{(i)} D_\ell \stackrel{\eqref{eq:ad-lin-bil}}{=}  \tau
	\Big( - x\phi^*_\ell \tau_{(i)} + \phi^*_{\ell(i)} \tau + \sum_{j=1}^m \mathrm{sgn}(j-i) \phi^*_{\ell(ij)} \tau_{(-j)} \Big).
	\end{equation}
\end{Rem}

\begin{Rem}
	In the context of integrable systems the parameter $x$ is called the \emph{spectral parameter}. Its precise meaning depends on the class of solutions we are interested in. 
\end{Rem}
\begin{Th} \label{th:na-mT}
The compatibility condition of the linear equations \eqref{eq:ad-lin-bil} in $m\geq 3$ discrete variables   
\begin{equation*}
\bphi^*_{(j)}\tau_{(i)}  -  \bphi^*_{(i)}\tau_{(j)} =  \bphi^*_{(ij)} \tau, 
\end{equation*}
supplemented by the (eigenvalue) equation
\begin{equation} \label{eq:lin-constr}
\bphi^*_{(1)} \tau_{(-1)} + \dots + \bphi^*_{(m)} \tau_{(-m)} = x {\bphi^*} \tau ,
\end{equation}
is the system composed of the Hirota equations \eqref{eq:H-M}
\begin{equation*} 
\tau_{(i)}\tau_{(jk)} - \tau_{(j)}\tau_{(ik)} + \tau_{(k)}\tau_{(ij)} = 0,
\qquad 1\leq i< j <k \leq m
\end{equation*}
and of the non-autonomous generalization of equation \eqref{eq:Paszkowski-D}
\begin{equation} \label{eq:H-M-T}
 \tau_{(1)} \tau_{(-1)} + \dots + \tau_{(m)} \tau_{(-m)} = F(|n|) \tau^2, 
\end{equation}
where $F$ is a function of the single variable $|n|=n_1 + \dots + n_m$.
\end{Th}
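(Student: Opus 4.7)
The proof rests on Lemma~\ref{lem:CDE} together with the standard fact (Section~\ref{sec:Hirota}) that the compatibility of the bilinear adjoint linear problem \eqref{eq:ad-lin-bil} on its own is exactly the Hirota system \eqref{eq:H-M}. The Hirota half of the claim is therefore immediate, and the whole content lies in showing that the additional eigenvalue equation \eqref{eq:lin-constr}, which in the notation of the Lemma is simply the condition $D_\ell=0$ for every $\ell$, is compatible with \eqref{eq:ad-lin-bil} precisely when \eqref{eq:H-M-T} also holds.

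For the forward direction, impose $D_\ell\equiv D_{\ell(i)}\equiv 0$ in the identity \eqref{eq:E-D-C}: it collapses to $\tau^2 E_{\ell(i)}=\phi^{*2}_{\ell(i)}C$, or equivalently
\[
\frac{E_{\ell(i)}}{\phi^{*2}_{\ell(i)}} \;=\; \frac{C}{\tau^2}, \qquad i=1,\dots,m.
\]
Since the right hand side is one and the same quantity for every $i$, the ratio $G_\ell:=E_\ell/\phi^{*2}_\ell$ takes the same value at each of the $m$ neighbours of a fixed lattice point, so $G_\ell$ is invariant under every shift $e_i-e_j$, and hence under the rank-$(m-1)$ sublattice $\{v\in\ZZ^m:v_1+\cdots+v_m=0\}$. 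Consequently both $G_\ell$ and $C/\tau^2$ depend only on $|n|$, and writing $C/\tau^2=1-F(|n|)$ together with $C=\tau^2-\sum_k\tau_{(k)}\tau_{(-k)}$ reproduces exactly \eqref{eq:H-M-T}.

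For the converse direction, assume $\tau$ satisfies \eqref{eq:H-M} and \eqref{eq:H-M-T}, so that $C/\tau^2=1-F(|n|)$. The identity \eqref{eq:E-D-C} then rearranges to
\[
\tau\phi^*_\ell D_{\ell(i)} \;=\; \tau^2\bigl[E_{\ell(i)}-(1-F(|n|))\phi^{*2}_{\ell(i)}\bigr] + \tau_{(i)}\phi^*_{\ell(i)}D_\ell,
\]
so if $D_\ell=0$ and $\phi^*_\ell$ additionally satisfies the shifted analog reduction $\sum_k\phi^*_{\ell(k)}\phi^*_{\ell(-k)}=F(|n|-1)\phi^{*2}_\ell$, then $D_{\ell(i)}=0$ propagates. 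By Corollary~\ref{cor:quad-phi} each $\phi^*_\ell$ fulfills the Hirota system, so this analog reduction is itself a consistent specialization; prescribing $\phi^*$ on the two-dimensional initial data standard for the Hirota hierarchy in a manner consistent with both \eqref{eq:lin-constr} and this analog constraint then yields a wave function of the combined system.

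The crux, and the main obstacle of the proof, is the passage from Lemma~\ref{lem:CDE}'s identity holding in every direction $i$ simultaneously to the level-set invariance of $C/\tau^2$. The shifts $e_i-e_j$ generate exactly the kernel of the grading $n\mapsto|n|$, which is why the reduction function $F$ must be univariate and why the resulting Paszkowski-constrained system lowers the effective dimensionality of the Hirota hierarchy by one, matching the two-dimensional Cauchy data anticipated in the Introduction.
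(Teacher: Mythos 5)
Your necessity argument is correct and takes a genuinely different route from the paper: you set $D_\ell=D_{\ell(i)}=0$ in the identity \eqref{eq:E-D-C} of Lemma~\ref{lem:CDE} and read off that $E_\ell/\phi^{*2}_\ell$ and hence $C/\tau^2$ are constant on the level sets of $|n|$, whereas the paper derives \eqref{eq:lin-S} from \eqref{eq:additional-lin} with $D_\ell=0$, builds $\bphi^*_{(i)}$ from the backward data, and shows by elimination (using \eqref{eq:ad-lin-bil} and the Hirota equations) that the whole obstruction to consistency with \eqref{eq:lin-constr} is the single term $\tau_{(-i)}\tau_{(-j)}\bphi^*\bigl((S/\tau^2)_{(-i)}-(S/\tau^2)_{(-j)}\bigr)$, where $S=\tau_{(1)}\tau_{(-1)}+\dots+\tau_{(m)}\tau_{(-m)}$. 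Your version of this half is shorter (modulo the genericity assumption $\phi^*_{\ell(i)}\neq 0$, $\tau\neq 0$), but note that the paper's single computation settles necessity \emph{and} sufficiency at once, because it exhibits the constraint as exactly the vanishing of the obstruction.

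The genuine gap is in your converse direction. From \eqref{eq:E-D-C} with $C=(1-F(|n|))\tau^2$ and $D_\ell=0$ you only obtain the one relation
\[
\tau\bigl(E_{\ell(i)}-(1-F(|n|))\,\phi^{*2}_{\ell(i)}\bigr)=\phi^*_\ell\, D_{\ell(i)},
\]
which cannot force $D_{\ell(i)}=0$ unless the quadratic reduction $\sum_k\phi^*_{\ell(k)}\phi^*_{\ell(-k)}=F(|n|-1)\phi^{*2}_\ell$ is already known at the shifted points. But that reduction is precisely the statement which the paper deduces as a Corollary \emph{from} Theorem~\ref{th:na-mT} (again via Lemma~\ref{lem:CDE} with $D_\ell\equiv 0$), so invoking it here is circular; Corollary~\ref{cor:quad-phi} gives only that the components $\phi^*_\ell$ satisfy the Hirota system \eqref{eq:H-M}, not this constraint. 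Likewise, the closing claim that one can prescribe initial data ``consistent with both'' \eqref{eq:lin-constr} and the analog constraint and that the combined system then admits a wave function is exactly the nontrivial propagation/solvability statement that has to be proved, not asserted. To repair the argument you would either have to prove joint propagation of the pair $D_\ell$ and $E_\ell-(1-F(|n|-1))\phi^{*2}_\ell$ from admissible initial data, or — more economically — carry out the paper's direct compatibility computation, in which sufficiency is immediate once the obstruction is identified with $(S/\tau^2)_{(-i)}-(S/\tau^2)_{(-j)}$.
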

\begin{proof}
In deriving the compatibility condition for equations \eqref{eq:ad-lin-bil} and \eqref{eq:lin-constr} one can start from initial data $\bphi^*_{(-i)}$, $i=1,\dots ,m$, and using the equations arrive to $\bphi^*_{(i)}$, $i=1,\dots ,m$. Then one can check if $\bphi^*$ calculated from equation~\eqref{eq:lin-constr} coincides with that calculated using \eqref{eq:ad-lin-bil}. Notice that the Hirota equations \eqref{eq:H-M} follow from the compatibility of the linear system \eqref{eq:ad-lin-bil} only.

Assuming $D_\ell = 0$ in equations \eqref{eq:additional-lin} we get
\begin{equation} \label{eq:lin-S}
x\tau \tau_{(i)} \bphi^* = \bphi^*_{(i)} S + \tau \sum_{k=1}^m \mathrm{sgn} (k-i) \bphi^*_{(ik)} \tau_{(-k)}, 
\end{equation}
where
\begin{equation}
 S = \tau_{(1)} \tau_{(-1)} + \dots + \tau_{(m)} \tau_{(-m)}.
\end{equation}
Shifting equation \eqref{eq:lin-S}$_i$ backward in the distinguished direction $i$, and using the linear problem \eqref{eq:ad-lin-bil}, we are in principle in a position of finding all $\bphi^*_{(i)}$, $i=1,\dots ,m$ in terms of the initial data and then checking the compatibility.

Let us consider index $j>i$, shift backward equation \eqref{eq:lin-S}$_j$ in $j$th direction and multiply it by $\tau_{(-i)}/\tau_{(-j)}$. Then subtract the result from equation \eqref{eq:lin-S}$_i$  shifted back in $i$th direction and multiplied by $\tau_{(-j)}/\tau_{(-i)}$. This gives equations
\begin{gather*} 
x\tau (\tau_{(-j)}\bphi^*_{(-i)} - \tau_{(-i)}\bphi^*_{(-j)}) = \tau_{(-i)} \tau_{(-j)} \bphi^* \left( \left( \frac{S}{\tau^2} \right)_{(-i)} - \left( \frac{S}{\tau^2} \right)_{(-j)} \right) + \\
- \sum_{k<i,j}\bphi^*_{(k)} \left( \tau_{(-j)} \tau_{(-i-k)} - \tau_{(-i)} \tau_{(-j-k)} \right) + \bphi^*_{(i)} \tau_{(-i)} \tau_{(-i-j)} + 
\sum_{i<k<j}\bphi^*_{(k)} \left( \tau_{(-j)} \tau_{(-i-k)} + \tau_{(-i)} \tau_{(-j-k)} \right) + \\
+ \bphi^*_{(j)} \tau_{(-j)} \tau_{(-i-j)} + 
\sum_{i,j<k}\bphi^*_{(k)} \left( \tau_{(-j)} \tau_{(-i-k)} - \tau_{(-i)} \tau_{(-j-k)} \right) ,
\end{gather*}
which because of the linear problem \eqref{eq:ad-lin-bil} and the Hirota system \eqref{eq:H-M} reduce to
\begin{equation*}
x\tau \tau_{(-i-j)} \bphi^* = 
\tau_{(-i)} \tau_{(-j)} \bphi^* \left( \left( \frac{S}{\tau^2} \right)_{(-i)} - \left( \frac{S}{\tau^2} \right)_{(-j)} \right) + \tau_{(-i-j)} \sum_{k=1}^m \tau_{(-k)} \bphi^*_{(k)}.
\end{equation*}
Finally, equation~\eqref{eq:lin-S} implies that the compatibility condition takes the following simple form
\begin{equation*}
\left( \frac{S}{\tau^2} \right)_{(-i)} = \left( \frac{S}{\tau^2} \right)_{(-j)} , \qquad i,j = 1, \dots , m,
\end{equation*}
i.e. $S/\tau^2$ is a function depending on the sum $|n|= n_1 + \dots + n_m$ of all variables.
\end{proof}
\begin{Cor}
	Generically, when the function $F$, given in equation \eqref{eq:H-M-T}, does not vanish, then the gauge function $G$ defined by
\begin{equation}
G(k) = \prod_{i=1}^{k_0} F(k-i)^i,
\end{equation}
can be used to define the rescaled $\tau$-function $\tilde{\tau}$
\begin{equation}
\tilde{\tau}(n)  = \frac{\tau(n)}{G(|n|)},
\end{equation}
which satisfies the Hirota system~\eqref{eq:H-M} with the original form of equation \eqref{eq:Paszkowski-D}
\begin{equation*}
\tilde{\tau}_{(1)} \tilde{\tau}_{(-1)} + \dots + \tilde{\tau}_{(m)} \tilde{\tau}_{(-m)} =  \tilde{\tau}^2.
\end{equation*}
\end{Cor}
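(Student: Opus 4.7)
The plan is to reduce the assertion to a routine verification in two stages: gauge-invariance of the Hirota system, and transformation of the constraint into a multiplicative recurrence satisfied by the prescribed $G$. Because the proposed gauge factor $G(|n|)$ depends on the discrete variables only through the sum $|n| = n_1 + \dots + n_m$, its effect on any bilinear expression $\tau_{(\alpha)}\tau_{(\beta)}$ appearing in the Hirota system~\eqref{eq:H-M} or in the non-autonomous constraint~\eqref{eq:H-M-T} is completely controlled by the totals $|\alpha|$ and $|\beta|$. In every monomial of \eqref{eq:H-M} the two factors carry $|n|+1$ and $|n|+2$, so the common overall factor $G(|n|+1)G(|n|+2)$ cancels, and $\tilde\tau$ satisfies the Hirota system if and only if $\tau$ does.

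Next I would substitute $\tau = G(|n|)\tilde\tau$ into the constraint~\eqref{eq:H-M-T}. Every term on the left-hand side has the form $\tau_{(i)}\tau_{(-i)}$, so its prefactor is always $G(|n|+1)G(|n|-1)$, independent of $i$. Writing $k=|n|$, the constraint becomes
$$G(k+1)G(k-1)\bigl(\tilde\tau_{(1)}\tilde\tau_{(-1)}+\dots+\tilde\tau_{(m)}\tilde\tau_{(-m)}\bigr)=F(k)\,G(k)^2\,\tilde\tau^2,$$
so that the Paszkowski identity $\tilde\tau^2=\sum_i\tilde\tau_{(i)}\tilde\tau_{(-i)}$ is equivalent to the second-order multiplicative recurrence
$$F(k)=\frac{G(k+1)G(k-1)}{G(k)^2}.$$

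It then suffices to check that the displayed product formula for $G$ solves this recurrence. Taking logarithms and reindexing $j=k-i$ gives $\log G(k)=\sum_{j=0}^{k-1}(k-j)\log F(j)$, after which the coefficient of each $\log F(j)$ in the second difference $\log G(k+1)-2\log G(k)+\log G(k-1)$ can be read off directly: it equals $(k+1-j)-2(k-j)+(k-1-j)=0$ for $j\leq k-2$, vanishes at $j=k-1$, and equals $1$ at $j=k$. Hence the second difference telescopes to $\log F(k)$, as required.

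There is no real obstacle; the one matter requiring attention is the assumption that $F$ does not vanish on the relevant range of $|n|$, which is needed both to form the product defining $G$ and to invert the recurrence, and this is precisely what the word ``generically'' in the statement refers to. A related point worth mentioning in passing is that $G$ is not unique: the general solution of the multiplicative recurrence has two free parameters coming from the choice of $G(0)$ and $G(1)$. Any alternative admissible choice differs from the one in the statement by a factor of the form $a\cdot b^{|n|}$, which drops out of both the Hirota equations and the reduced Paszkowski identity, so the conclusion is insensitive to this freedom.
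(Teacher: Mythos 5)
Your verification is correct and is exactly the routine gauge computation that the paper leaves unproved after Theorem~\ref{th:na-mT}: a factor depending only on $|n|$ drops out of each bilinear Hirota term, and the constraint \eqref{eq:H-M-T} transforms into the multiplicative recurrence $F(k)=G(k+1)G(k-1)/G(k)^{2}$, which the displayed product (reading the paper's upper limit $k_0$ as $k$) satisfies by the telescoping second-difference argument you give. Your closing observation that $G$ is unique only up to a factor $a\,b^{|n|}$, which cancels in both the Hirota equations and the reduced constraint, is also accurate and consistent with the ``generically'' in the statement.
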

\begin{Cor}
	By Corollary~\ref{cor:quad-phi} and Lemma~\ref{lem:CDE} the components $\phi_\ell^*$ of the adjoint wave function satisfy equations
\begin{align} 
\phi^*_{\ell(i)}\phi^*_{\ell(jk)} - \phi^*_{\ell(j)}\phi^*_{\ell(ik)} + \phi^*_{\ell(k)}\phi^*_{\ell(ij)} = 0,
\qquad & 1\leq i< j <k \leq m \\
\phi^*_{\ell(1)} \phi^*_{\ell(-1)} + \dots + \phi^*_{\ell(m)} \phi^*_{\ell(-m)} = F(|n|-1) \phi^{*2}_\ell .& 
\end{align}
\end{Cor}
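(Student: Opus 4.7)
The plan is to prove the two equations separately, each relying on a different ingredient from the cited references.

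The bilinear Hirota equation for $\phi^*_\ell$ follows at once from Corollary~\ref{cor:quad-phi}. By Theorem~\ref{th:na-mT} the $\tau$-function under consideration solves the Hirota system~\eqref{eq:H-M}, and $\bphi^*=\tau\bpsi^*$ for a solution $\bpsi^*$ of the adjoint linear problem \eqref{eq:ad-lin-bil}. The corollary then asserts, component by component, precisely the first equation, so no further work is required for this half.

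For the non-autonomous Paszkowski-type constraint I would invoke Lemma~\ref{lem:CDE}. The eigenvalue equation \eqref{eq:lin-constr} is by definition the statement $D_\ell=0$, and therefore every shift $D_{\ell(i)}$ also vanishes identically on the lattice. Feeding this into the identity \eqref{eq:E-D-C} collapses it to the much simpler relation
\begin{equation*}
\tau^2\,E_{\ell(i)} \;=\; \phi^{*2}_{\ell(i)}\,C.
\end{equation*}
By Theorem~\ref{th:na-mT} we have $\tau_{(1)}\tau_{(-1)}+\dots+\tau_{(m)}\tau_{(-m)}=F(|n|)\,\tau^2$, so $C=\bigl(1-F(|n|)\bigr)\tau^2$. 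Dividing through by $\tau^2$ (generic non-vanishing) yields $E_{\ell(i)}=\bigl(1-F(|n|)\bigr)\phi^{*2}_{\ell(i)}$. Reinterpreted at the lattice point $m:=n+e_i$, where $|m|=|n|+1$, this reads
\begin{equation*}
E_\ell(m) \;=\; \bigl(1-F(|m|-1)\bigr)\,\phi^{*2}_\ell(m),
\end{equation*}
which after unpacking the definition of $E_\ell$ is exactly the stated constraint with the shifted argument $F(|n|-1)$.

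The one subtlety to check is that the identification above is unambiguous: a given point $m$ may be written as $n+e_i$ for several directions $i$, so one could fear an inconsistency between the values of $E_\ell(m)/\phi^{*2}_\ell(m)$ produced from different backshifts. It is consistent precisely because the right-hand side $1-F(|m|-1)$ depends only on the total degree $|m|$, mirroring the direction-independence argument that closed the proof of Theorem~\ref{th:na-mT}. I expect this bookkeeping around the shift in the argument of $F$ to be the only subtle point; everything else is a direct substitution into the lemma and a single algebraic manipulation.
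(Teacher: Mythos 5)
Your proof is correct and is exactly the argument the paper intends: the first equation is the content of Corollary~\ref{cor:quad-phi}, and the second follows from Lemma~\ref{lem:CDE} by setting $D_\ell=D_{\ell(i)}=0$ and $C=(1-F(|n|))\tau^2$, then relabelling the lattice point to account for the shift, which produces the argument $F(|n|-1)$. Your remark on the direction-independence of the resulting identity is the right consistency check and mirrors the conclusion of Theorem~\ref{th:na-mT}.
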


\begin{Rem}
In \cite{AptekarevDerevyaginMikiVanAssche} the following system (equations (3.32) and (3.33) of that paper) consisting of the Hirota(--Miwa) equations
\begin{equation} \label{eq:HM-ADMV}
\tau^t_{\vec{n}+ \vec{e}_i + \vec{e}_j} \tau^{t+1}_{\vec{n}} -
\tau^t_{\vec{n}+ \vec{e}_i } \tau^{t+1}_{\vec{n}+ \vec{e}_j} + 
\tau^t_{\vec{n}+ \vec{e}_j } \tau^{t+1}_{\vec{n}+ \vec{e}_i }
= 0, \qquad i\neq j,
\end{equation}
supplemented by the constraint
\begin{equation}
\label{eq:P-ADMV}
\tau^{t+1}_{\vec{n}}\tau^{t-1}_{\vec{n}} = 
(\tau^{t}_{\vec{n}})^2 + \sum_{k=1}^r 
\tau^{t-1}_{\vec{n}+ \vec{e}_k } 
\tau^{t+1}_{\vec{n} - \vec{e}_k }, 
\end{equation} 
is obtained within the context of multiple orthogonal polynomials; here $\vec{n} = (n_1, \dots , n_r)$ and $\vec{e}_i$, $i=1, \dots , r$, is the $i$th vector of the standard basis. The above equations constitute bilinear form for
the so called \emph{discrete multiple} Toda lattice, obtained by generalization of the orthogonal polynomial approach to  the discrete Toda lattice. Under identification (abusing slightly the notation)
\begin{equation}
\tau^{t}_{\vec{n}} = i^{n_1^2 + \dots + n_r^2} \tau(n_1,\dots, n_r,n_{r+1}),
\end{equation}
where 
\begin{equation} 
n_{r+1} = - (t + n_1 + \dots + n_r), \qquad m = r+1,
\end{equation}
one obtains equations the Hirota system supplemented by the Paszkowski constraint studied in the present paper.

The discrete linear system (the Lax set) considered in \cite{AptekarevDerevyaginMikiVanAssche} is obtained from the Christoffel and Geronimus transformations of multiple orthogonal polynomials, which give the discrete-time evolution. Both equations~\eqref{eq:HM-ADMV} and \eqref{eq:P-ADMV} involve variation of the time variable, which seems to be essential in the integrability construction presented there. 
\end{Rem}

\bibliographystyle{amsplain}

\section{Conclusion}
Although the Hermite--Pad\'{e} approximants had been introduced by Hermite in his proof of transcendency of the Euler constant $e$ they have found applications not only in number theory, but also in the theory of multiple orthogonal polynomials, random matrices or numerical analysis.
Our result, which incorporates the theory of Hermite--Pad\'{e} approximants into the theory of integrable systems explains the appearance of various ingredients of the integrable systems theory in all the branches of mathematical physics and applied mathematics where the Hermite--Pad\'{e} approximation problem is relevant.

It is interesting therefore to notice that Hirota's discrete KP equation, which is considered as the most important discrete integrable equation, appeared almost at the same time in the theory of integrable systems (for $m=3$) and in the theory of Hermite--Pad\'{e} approximants (for arbitrary $m\geq 3$).
It turns out that in the application to the Hermite--Pad\'{e} approximants the Hirota system should be supplemented by the Paszkowski constraint, which provides an integrable reduction of the system. We have not found any mention of his  papers in research articles devoted to relation of the Hermite--Pad\'{e} approximants, orthogonal polynomials or random matrices to integrability. 

One can think about application of techniques from the integrability theory (finite gap integration, non-commutative generalizations, etc.) to study corresponding versions of the Hermite--Pad\'{e} approximants. On the other hand in the numerical analysis there are known other nonlinear methods~\cite{CuytWuytack} which generalize the approximants and deserve a closer look from the point of view of integrable systems.

\section*{Acknowledgements}
We thank the Reviewers for constructive comments which helped us to improve the presentation.

\providecommand{\bysame}{\leavevmode\hbox to3em{\hrulefill}\thinspace}

\end{document}